\DeclareMathOperator*{\argmax}{arg\,max}
\newtheorem{theorem}{Theorem}
\newtheorem{lemma}{Lemma}
\begin{document}
%
\title{Correlation-Based Community Detection}
%
%
%
%

\author{Zheng Chen,
        Zengyou He,
        Hao Liang, Can Zhao and Yan Liu
\IEEEcompsocitemizethanks{
\IEEEcompsocthanksitem Z. Chen, H. Liang and Y. Liu are with School of Software, Dalian University of Technology, Dalian, China.
\IEEEcompsocthanksitem Z. He is with School of Software, Dalian University of Technology, Dalian,China, and Key Laboratory for Ubiquitous Network and Service Software of Liaoning Province, Dalian, China.\protect\\
E-mail: zyhe@dlut.edu.cn

\IEEEcompsocthanksitem C. Zhao is with Institute of Information Engineering, CAS.  }
}

%
%

\markboth{Journal of \LaTeX\ Class Files,~Vol.~14, No.~8, August~2015}%
{Shell \MakeLowercase{\textit{et al.}}: Bare Demo of IEEEtran.cls for Computer Society Journals}
%



\IEEEtitleabstractindextext{%
\begin{abstract}
 Mining community structures from the complex network is an important problem across a variety of fields. Many existing community detection methods detect communities through optimizing a community evaluation function.  However, most of these functions even have high values on random graphs and  may fail to detect small communities in the large-scale network (the so-called resolution limit problem). In this paper, we introduce two novel node-centric community evaluation  functions by connecting correlation analysis with community detection. We will further show that  the correlation analysis can provide a novel theoretical framework which unifies some existing evaluation functions  in the context of a correlation-based optimization problem.  In this framework, we can mitigate the resolution limit problem and eliminate the influence of random fluctuations by selecting the right correlation function. Furthermore,  we introduce three key properties used in mining association rule into the context of community detection to help us choose the appropriate correlation function. Based on our introduced correlation functions, we propose a  community detection algorithm called CBCD. Our proposed algorithm outperforms existing state-of-the-art algorithms on both synthetic benchmark networks and real-world networks.
\end{abstract}

\begin{IEEEkeywords}
Complex networks, community detection, correlation analysis, random graph, node-centric function.
\end{IEEEkeywords}}

\maketitle

\IEEEdisplaynontitleabstractindextext

%
\IEEEpeerreviewmaketitle

\IEEEraisesectionheading{\section{Introduction}\label{sec:introduction}}

%
%
%
%
\IEEEPARstart{C}{OMMUNITY} detection plays a key role in network science, bioinformatics~\cite{spirin2003protein}, sociological analysis~\cite{papadopoulos2012community} and data mining. It not only helps us identify the network modules, but also offers insight into how the entire network is organized by local structures. The detected communities could be interpreted as the basic modules of various kind of networks, e.g. social circles in social networks~\cite{feld1981focused}, protein complexes in protein interaction network~\cite{krogan2006global}, or groups of organisms in food web network~\cite{girvan2002community}. More generally, a widely accepted consensus on community~\cite{Fortunato2009Community} is that the community should be a set of vertices that has more edges within the community than edges linking vertices of the community with the rest of the graph.


Although community detection has been extensively investigated during the past decades, there is still no common agreement on a formal definition regarding what a community exactly is. Many existing community detection algorithms are based on the previously mentioned criterion (more internal connections than external connections), with proposed quality metrics that quantify how community nodes connect internal nodes densely and external nodes sparsely. For example, popular metrics such as \textit{betweenness}~\cite{girvan2002community}, \textit{modularity}~\cite{newman2004finding}, \textit{conductance}~\cite{Chung1997Spectral}, \textit{ratio cut}~\cite{wei1989towards}, \textit{density}~\cite{mancoridis1998using} and \textit{normalized cut}~\cite{shi2000normalized} are all based on this intuitive idea. And existing related algorithms first utilize these metrics to derive globally-defined objective functions, then maximize (or minimize) the objective function by partitioning the whole graph  \cite{newman2004fast} \cite{schuetz2008efficient} \cite{pujol2006clustering} \cite{noack2009multi} \cite{spielman2004nearly} \cite{andersen2006communities} \cite{andersen2006local} \cite{kloster2014heat} \cite{zhu2013local}  \cite{van2016local} \cite{benson2016higher}.

The traditional view on community evaluation  relies on counting edges in different ways. Simply doing that is not a sensible way, and it is not sufficient to convince people that identified community structure is pronounced. In \cite{fortunato2016community},
\begin{figure}[!ht]
    \centering
	  \subfloat[]{
       \includegraphics[width=0.9\linewidth]{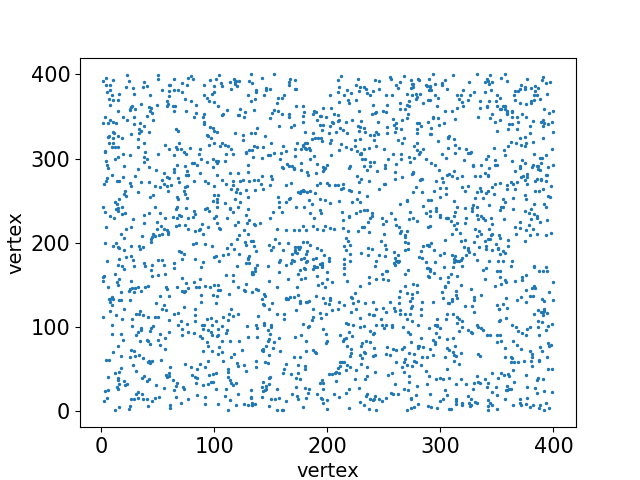}}
    \label{1a}\hfill
	  \subfloat[]{
        \includegraphics[width=0.9\linewidth]{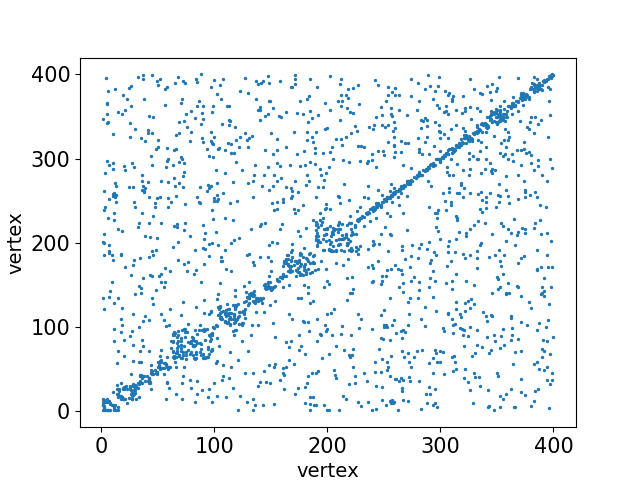}}
    \label{1b}\\
	  \caption{ Two matrices of the same random graph that is generated according to the E-R model. Fig. 1(a) shows the original matrix. Fig. 1(b) is a matrix obtained by reshuffling the order of vertices.}
	  \label{fig2}
\end{figure}
Fortunato and Hric pointed  out an amazing fact that an Erdos-Renyi (E-R) random graph~\cite{erdos1959random} can generate the modular structures. To exemplify this phenomenon, a 400$\times$400 matrix of E-R random graph is illustrated in Fig. 1. As shown in Fig. 1(a), the matrix is evidently disordered and random such that no one believes there exist community structures.
However, as long as rearranging the entries of the matrix  by reorganizing the index of the vertices, the modular structures emerge at the diagonal of the matrix. Such structures  should not be real and are generated  due to the random fluctuations in the network construction process~\cite{fortunato2016community}. This weird phenomenon causes that many clustering algorithms whose metrics do not consider the influence of random fluctuations  identify communities even in the random networks as well. From the above illustrations, it makes sense that community metrics should  be considered how significant their result is.

Probably the most popular community quality metric is the modularity introduced by Newman and Girvan~\cite{newman2004finding}. This metric is based on a prior work about a measure of assortative mixing which is proposed in~\cite{newman2003mixing}. It evaluates the quality of a partition of the network. The global expression of the modularity function is:
$$ Q = \frac{1}{2m}\sum_{ij}(W_{ij}-E_{ij})\delta(C_i,C_j),        \eqno{(1)}$$
where $m$ is the number of edges, $W_{ij}$ is the element of the adjacency matrix $W$, $\delta(x,y)$ is the kronecker delta function whose value is 1 if $x=y$ and 0 otherwise, $C_i$ and $C_j$ represent the community index of  $i$  and  $j$  respectively, $E_{ij}$ is the expected number of edges that connect vertex $i$ and vertex $j$ under the random graph \textit{null model}. The null model specifies how to generate a random graph that preserves some characteristics of the original graph. The most commonly used null model is the configuration model~\cite{bollobas1980probabilistic}~\cite{molloy1995critical}, where the degrees of all vertexes are preserved in the random graph. Under the configuration model, one can derive that $E_{ij} = d_id_j/2m$ ($d_i$ and $d_j$ are the degree of $i$ and $j$, respectively). Now we can reformulate the modularity as:
$$ Q = \sum_{j} \Big[ \frac{l_j}{m}- \Big( \frac{K_j}{2m} \Big)^2 \Big],        \eqno{(2)}$$
where $K_j$ represents the sum of degrees of all vertexes in community $j$ and $l_j$ is the number of internal edges within community $j$. At the same time, we could use $\l_j/{m}-(K_j/2m)^2$ as a metric for evaluating the quality of community $j$. The intuition behind this formula is that simply counting edges are not sufficient to determine a true community structure so that the number of expected edges in the null model should be incorporated as well. Unfortunately,  the modularity maximum does not equal to the most pronounced community structure.  This is the well-known resolution limit problem which modularity suffers from, i.e.,  the modularity function may fail to detect modules which are smaller than a scale in large networks.  Many techniques have been used to mitigate the resolution limit problem \cite{Fortunato2007Resolution}, such as the multi-resolution method.  Note that the multi-resolution method does not provide a reliable solution to the problem \cite{Lancichinetti2011Limits}.  In addition, it is also known that the modularity can be a high value even in the E-R random graph \cite{guimera2004modularity}.  This seems counterintuitive, but it is the fact since the modularity is a kind of measurement which measures the distance between real network structure and the ``average" of random network structures. As a result, there is no sufficient information  to confirm the distribution of the modularity.

In addition,  the community is a local structure of  the network. We should examine the community from the local view.
The strength of connectedness between one node and a community embeds the local feature of that community. Using these local features to construct a global evaluation function  for a community helps us avoid from missing the local structural  information.  However, how to assess the strength of connectedness between one node and the community is a challenging task since there is no convinced definition on what the strength of connectedness is.

\textit{Our Contributions}.  In this paper,  we propose a correlation-based community detection framework, in which the local connectedness strength between each node and a community is assessed through correlation analysis. More precisely, we represent the basic structural information of a node as the corresponding row vector embedded in the adjacency matrix and encode the structural information of a community into a binary vector. To demonstrate the feasibility and advantages of this framework,  we introduce two concrete node-centric community metrics  based on the correlation between a node and a community, PS-metric (node version of modularity) and $\phi$-Coefficient metric.  Moreover, we provide the detailed theoretical analysis for our metrics, which show that these metrics are capable of mitigating the resolution limit problem and alleviating the fake community issue in E-R random graph.    Besides, we present a correlation-based community detection (CBCD) method which adopts  two introduced metrics to identify the community structure. Experimental results on both real networks and the LFR networks show that CBCD outperforms state-of-the-art methods. The summary of contributions of this paper is listed as follows:
\begin{itemize}
\item To the best of our knowledge, we are the first to introduce correlation analysis into the node-centric metrics, which calculate the correlation value between a node and a community. Besides,  we slightly modify three key properties for mining association rule \cite{PiatetskyShapiro1991DiscoveryAA} in the context of community detection, which can guide us select the right correlation function for community detection.

\item We further  investigate the relationship between the correlation analysis and community detection.  It has been found that correlation analysis can be viewed as a theoretical interpretation framework for community detection, which unifies some exiting metrics and provide the potential of the deriving new and better community evaluation measures.

\item The introduced PS-metric (node version of modularity) is less affected by the random fluctuations. We not only give the detailed theoretical proof, but also show the empirical results to validate the effectiveness of this metric under the E-R model. Moreover, we prove that the $\phi$-Coefficient metric can mitigate the resolution limit problem and give an intuitive interpretation.

\item Experimental results demonstrate that our method outperforms state-of-the-art methods on  real networks and LFR networks.
\end{itemize}

The organization of this paper is structured as follows: Section 2 discusses the related work.  Section 3 introduces several basic definitions of correlation analysis and proves some properties of proposed metrics.  Section 3 describes our corresponding community detection method. Section 4 presents the experimental results of CBCD along with the other methods. Section 5 concludes this paper.



\section{Related Work}
\subsection{Correlation-Based Method}
There are already some algorithms in the literature that utilize the different types of correlation information hidden behind the network structure to solve the community detection problem. Once the correlation definition is specified, community detection can be cast as an optimization problem by maximizing a correlation-based objective function. The existing community detection algorithms based on correlation can be categorized according to the different correlation definitions.
\subsubsection{Node-Node Similarity}
The pairwise similarity between two nodes can be viewed as a kind of correlation. The goal of community detection based on node-node similarity is to put the nodes which are close to each other into the same group.  In \cite{pons2005computing}, a random-walk based node similarity was proposed, which can be used in an agglomerative algorithm to efficiently detect the communities in the network. The method in \cite{pan2010detecting} is based on the node similarity proposed in~\cite{zhou2009predicting} to find communities in an iterative manner.
\subsubsection{Node-Node Correlation}
A covariance matrix of the network is derived from  the incidence matrix in \cite{shen2010covariance}, which can be viewed as the unbiased version of the well-known modularity matrix. A correlation matrix is obtained through introducing the re-scaling transformation into the covariance matrix, which significantly outperforms the covariance matrix on the identification of communities.  The algorithm in \cite{zarei2009detecting} constructs a node-node correlation matrix based on the Laplacian matrix so as to incorporate the feature of NMF (non-negative matrix factorization) method. In \cite{PhysRevX.5.021006}, MacMahon et al. introduce the appropriate correlation-based counterparts of the most popular community detection techniques via a consistent redefinition of null models based on random matrix theory. A correlation clustering \cite{bansal2004correlation} based community detection framework is proposed in \cite{veldt2018correlation}, which unifies the modularity, normalized cut, sparsest cut, correlation clustering and cluster deletion by introducing a single resolution parameter $\lambda$.

\subsubsection{Edge-Community  Correlation}
In~\cite{duan2014community}~\cite{duan2017utilizing}, Duan et al. connect the modularity with correlation analysis by reformulating modularity's objective function as a correlation function based on the probability that an edge falls into the community under the configuration model. Hence, it can be viewed as a method based on  edge-community correlation information.

\subsection{Node-Centric Method}
Almost all classical community detection metrics implicitly assume that all nodes in a community are equally important. These metrics, viewing the community  as a whole, only focus on the total number of edges  within the community, the sum of degree of all nodes or the size of the community. The connection strength between  each vertex and the community is not involved in these metrics. This  may result in missing important structural  information so that  two distinct communities cannot be distinguished  in certain  circumstances.  Node-centric methods take into account how each vertex connects the community densely. This is consistent with the fact that the community is a local structure of the network.

Chakraborty et al. \cite{chakraborty2014permanence} \cite{chakraborty2016genperm} \cite{chakraborty2016permanence} propose a new node-centric metric called permanence, which describes the membership strength  of a node to a community. The central idea behind permanence is to consider  the following   two factors: (1) the distribution of the external edges rather than the number of all external connections (2) the strength of the internal-connectivity instead of the number of all  internal edges. The corresponding algorithm is to maximize the permanence-based objective function. WCC is another node-centric metric proposed in~\cite{prat2014high}~\cite{prat2016put}, which considers the triangle as the basic structure instead of the edge or node. WCC of a node consists of two parts: isolation and intra-connectivity. The proposed algorithm aims at the optimization of a WCC-based objective function.  Focs \cite{bandyopadhyay2015focs} is a heuristic method that accounts for local connectedness.  The local connectedness of a node to a community depends upon two scores of the node with respect to the community:  community connectedness and neighborhood connectedness. Focs mainly consists of two phases: leave phase and expand phase, which is respectively based on community connectedness and neighborhood connectedness.

\subsection{Summary}
Our method is different from all above existing  methods. Although the proposed metrics in our method are node-centric as well, they are derived from the  perspective of correlation analysis. Different from the existing correlation-based methods, our method utilizes the correlation  between the node and the community based on a $2\times2$ contingency table. Thus, our method can be viwed as a node-community correlation based method. Besides, we will choose appropriate correlation functions to mitigate the resolution limit problem under the guidance of~\cite{tan2004selecting}~\cite{duan2014selecting}. The further analysis will be given  in section 3.

\section{Correlation Analysis In Network}

In this section, we first formalize the community detection problem, then, introduce two correlation measures and extend it for community structure evaluation. Two new correlation metrics for community detection will be proposed in this section. Next we will describe modularity from correlation analysis perspective and discuss its relation with our metrics. At last, we will discuss the desirable properties for new correlation metric.
\subsection{Preliminaries}
\noindent \textbf{Undirected Graph.} Let $G = (V,E,W)$ be an undirected graph with $n$ nodes and $m$ edges, where $V$ is the node set, $E$ is the edge set, $W$ is the adjacency matrix. If an edge $(u,v)\in E$, then $W_{uv} = 1$ and $W_{uv}=0$ otherwise. In the case of undirected graph, $W_{uv} = W_{vu}$, it means that the adjacency matrix $W$ is a symmetric matrix.

\noindent \textbf{Community Detection.} Given a graph $G = (V,E,W)$, the goal of community detection is to partition the graph with $|V| = n$ vertices into $l$ pairwise disjoint groups $P = \{ S_{1},...,S_{l} \}$, where $S_{1}\bigcup..\bigcup S_{l} = V$ and $S_{i}\bigcap S_{j} =  \emptyset$ for any $i,j$.

To begin with, we define $f(u,S)$ as a measurement function which takes a vertex $u$ and a community $S$ containing $u$ as the input, and return a real value which indicates the connectivity of vertex $u$ regarding the community $S$. The function $f(u,S)$ should satisfy certain basic requirements so that it can be used to evaluate the structure of community $S$. First, $f(u,S)$ must be bounded to guarantee the convergence of community search algorithms. Second, the more strongly a vertex connects to a community, the higher $f(u,S)$ becomes. The detailed properties that a measurement function needs to possess for satisfying the second requirement will be further discussed in section 3.2. Now, we can define the vertex-centric metric of a community $S$ as the sum of the function $f(u,S)$ of all members $u$ that belong to the community $S$:
$$ F(S) = \sum_{u\in S}f(u,S).        \eqno{(3)}$$
Analogously to what we have done before, we define the objective function of a partition  $P$ through taking the sum of the vertex-centric metric value of each community $S_{i} \in P$:
$$ \Gamma (P) = \sum_{j=1}^{l} F(S_{j}).        \eqno{(4)}$$
Given a graph $G = (V,E,W)$, community detection can be cast as an optimization problem with (4) as the objective function. The partition $P$ is optimal when $\Gamma(P)$ achieves a maximum value, and we call this partition the optimal partition.

Now the key point lies in how to formulate a feasible $f(u,S)$. A natural idea is to use the correlation function between $u$ and $S$ as $f(u,S)$. The correlation function measures the correlation relationship between $u$ and $S$, in which a higher correlation value indicates that there is a strong affinity between $u$ and $S$. However, we need to choose an appropriate correlation model to represent the connectivity cohesion of vertex $u$ regarding community $S$. Moreover, the information about the topology structure of a node and a community should also be considered in this model. In this paper, we convert the set of nodes which belong to a community into a binary vector to embody the basic structural information of the community. The positions of nodes in the graph are determined by their neighbourhoods, that is, if two nodes connect to the same set of neighbors, then the whole graph will retain the same structure after exchanging the positions of these two nodes. Therefore, the local structure of a node can be depicted by its row vector embedded in the adjacency matrix. In summary, we have the following definitions.

\noindent
\textbf{Community vector and Neighbor vector.} The community vector of $u$ with respect to $S$, $\Psi_{S\backslash \{ u \} }=(e_1, ... e_{u-1},e_{u+1}, ...e_n)$, is a binary vector of length $n-1$, where $e_v = 1$ if vertex $v$ belongs to community $S$ and $e_v = 0$ otherwise. $\Psi_{S\backslash \{ u \}}[v]$ denotes the value of vector element $e_v$. $\psi_{u}=(g_1, ... g_{u-1},g_{u+1}, ...g_n)$ is called the neighbor vector of vertex $u$, where $g_{v} = 1$ if there exists an edge between $u$ and $v$, and $g_{v} = 0$ otherwise. $\psi_{u}[v]$ denotes the value of vector element $g_v$. Note that vertex $u$ is excluded from the vector, we will give an explanation from the probabilistic perspective in the following part.

The above two binary vectors can be viewed as the samples of two binary variables $\mathcal{C}(x,u)$ and $\mathcal{G}(x,S\backslash \{ u \})$, where $\mathcal{C}(x,u) = \psi_{u}[x]$ and $\mathcal{G}(x,S\backslash \{ u \})=\Psi_{S\backslash \{ u \}}[x]$. A $2\times2$ contingency table for these two variables is given in Table 2. The entry $f_{11}$ is denoted by $\omega$, that represents the count of inner connections of vertex $u$ regarding community $S$. The entry $f_{11}+f_{10}$ is denoted by $\epsilon$, $\epsilon = |S|-1$ and $|S|$ is the size of community $S$.
\begin{table}[!h]
\renewcommand{\arraystretch}{1.7}
\caption{
A $2\times2$ contingency table for vertex $u$ and community $S$ .}
\label{table3}
\centering
\begin{tabular}{|c|c|c|c|}
\hline
 & $\mathcal{C}(x,u)=1$ & $\mathcal{C}(x,u) = 0$ & $ \sum_{j}f_{ij}$  \\
\hline
$\mathcal{G}(x,S\backslash \{ u \})=1$  & $\omega$  & $f_{10}$  & $\epsilon$  \\
\hline
$\mathcal{G}(x,S\backslash \{ u \})=0$  & $f_{01}$  & $f_{00}$  & $f_{01}+f_{00}$   \\
\hline
$ \sum_{i}f_{ij}$   & $d_{u}$  & $f_{10}+f_{00}$  & $N$  \\
\hline
\end{tabular}
\end{table}
The entry $f_{11}+f_{01}$ is denoted by $d_{u}$, where $d_{u}$ is the degree of vertex $u$. The uppercase Roman letter $N$ is the length of the binary vector, where $N=n-1$.

\begin{figure*}[!htb]
    \centering
		\begin{overpic}[scale=0.7]{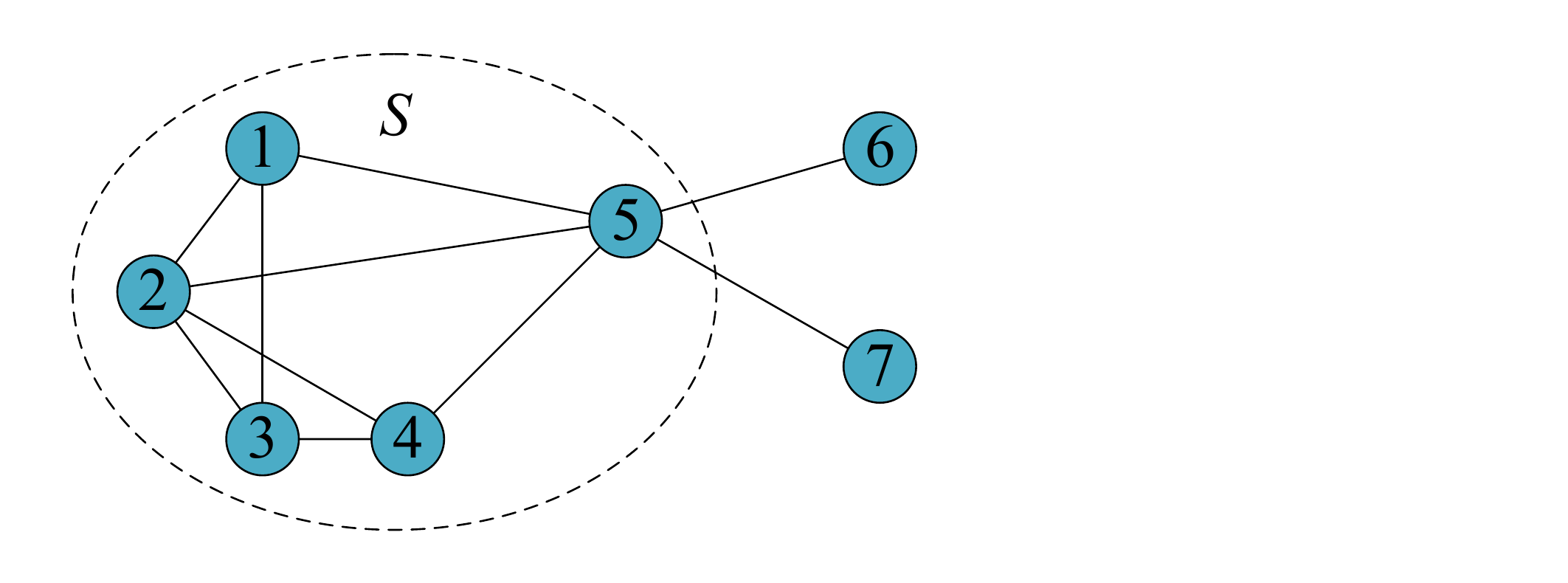}
			\put(70,20){\scalebox{1.5}{$\Psi_{S\backslash \{ 5 \} } = (e_1,e_2,e_3,e_4,e_6,e_7)$}}
			\put(81.5,15){\scalebox{1.5}{$=(1,1,1,1,0,0)$}}
			\put(76.9,10){\scalebox{1.5}{$\psi_{5} = (g_1,g_2,g_3,g_4,g_6,g_7)$}}
			\put(81.9,5){\scalebox{1.5}{$=(1,1,0,1,1,1)$}}
		\end{overpic}
	 \caption{ One example graph with 7 nodes and 10 edges, and $S$ is a community of size 5. The community vector $\Psi_{S\backslash \{ 5 \} }$ and neighbor vector  $\psi_{5}$ are given in the right part of the figure. }
	  \label{fig2}
\end{figure*}
 Now we can give several probability definitions that will be used in the correlation measure. Firstly, to simplify the notations, $\mathcal{C}$ is used to denote $\mathcal{C}(x,u)=1$, and $\mathcal{G}$ is used to denote $\mathcal{G}(x,S\backslash \{ u \})=1$. Then, $P(\mathcal{C} \mathcal{G}) = \frac{\omega}{N}$ is the probability that a randomly chosen vertex from the vertex set $V\backslash \{ u \}$ connects to vertex $u$ and belongs to community $S$ simultaneously. $P(\mathcal{C}) = \frac{d_{u}}{N}$ is the probability that a randomly selected vertex from $V\backslash \{ u \}$ has a connection with $u$. $P(\mathcal{G}) = \frac{\epsilon}{N}$ is the probability that a randomly chosen vertex from $V\backslash \{ u \}$ belongs to community $S$. Under the assumption of independence, the probability of $\mathcal{C} \mathcal{G}$ can be calculated by $P(\mathcal{C} \mathcal{G}) = P(\mathcal{C})P(\mathcal{G})={\epsilon d_{u}}/{N^2}$.

The above definitions and notations are exemplified in Fig. 2, where there is a community $S$ of size 5. Here we let $u=5$, $\mathcal{C}$ denotes $\mathcal{C}(x,5)=1$ and $\mathcal{G}$ denotes $\mathcal{G}(x,S\backslash \{ 5 \})=1$. We will calculate $P(\mathcal{G})$, $P(\mathcal{C})$ and $P(\mathcal{C}\mathcal{G})$. To begin with, we should know how to get two vectors $\Psi_{S\backslash \{ 5 \} }$ and $\psi_{5}$. In Fig. 2, since node 5 connects almost all the nodes except node 3, $\psi_{5}$ has only one zero entry $g_3$. Since node 6 and node 7 are not included in community $S$ , their corresponding entries $e_{6}$ and $e_{7}$ are both zeros. After getting these two binary vectors, $\omega$, $d_5$, $N$ and $\epsilon$  can be calculated accordingly, that is, $\omega=3$, $d_5=5$, $N=6$, $\epsilon=4$. Then, we have:
$$ P(\mathcal{G}) = \frac{4}{6}, P(\mathcal{C}) = \frac{5}{6}, P(\mathcal{C}\mathcal{G}) = \frac{3}{6}.       $$

\subsection{Correlation Measure}
Many functions have been proposed to measure the correlation between two binary vectors
 in statistics, data mining and machine learning. There are some guidelines~\cite{tan2004selecting}~\cite{geng2006interestingness}~\cite{duan2014selecting} provided for users to select the right measure according to their needs. The most measures consist of $P(\mathcal{G})$, $P(\mathcal{C})$ and $P(\mathcal{C}\mathcal{G})$. Here we will list several properties to instruct us in choosing the appropriate measure for community detection. Let $M$ be a measure for correlation analysis between two variables $\mathcal{C}$ and $\mathcal{G}$. In~\cite{PiatetskyShapiro1991DiscoveryAA}, Piatetsky-Shaprio came up with three key properties about a good correlation measure for association analysis:

\vspace{0.15cm}
\noindent \textbf{P1}: $M = 0$ if $\mathcal{C}$ and $\mathcal{G}$ are statistically independent;

\noindent \textbf{P2}: $M$ monotonically increases with $P(\mathcal{C}\mathcal{G})$ when $P(\mathcal{C})$ and $P(\mathcal{G})$ are both fixed;

\noindent \textbf{P3}: $M$ monotonically decreases with $P(\mathcal{C})$ (or $P(\mathcal{G})$) when $P(\mathcal{C}\mathcal{G})$ and $P(\mathcal{G})$ (or  $P(\mathcal{C}\mathcal{G})$ and $P(\mathcal{C})$) are fixed.
\vspace{0.15cm}

 Note that these three properties are used in association analysis, we need further investigation in the context of community detection. \textbf{P1} indicates that $M$ should be able to measure the deviation from statistical independence. Then, the higher $M$ is, the stronger dependence between
 $\mathcal{C}$ and $\mathcal{G}$ is. Node $u$ can be regarded as the member of community $S$ when $M>0$. In this paper, we use the E-R model as the underlying random graph model to describe the statistical independence. Given the graph $G$ with $n$ nodes and $m$ edges, one random graph under the E-R model is generated by forming an edge between any two nodes randomly and independently with the probability $p=2m/(n(n-1))$. By redefining \textbf{P1}, $M$ should be a measure such that the expected value of $M$ equals to zero for any node $u$ and community $S$ under the E-R model.
\begin{figure}[!ht]
    \centering
       \includegraphics[width=0.9\linewidth]{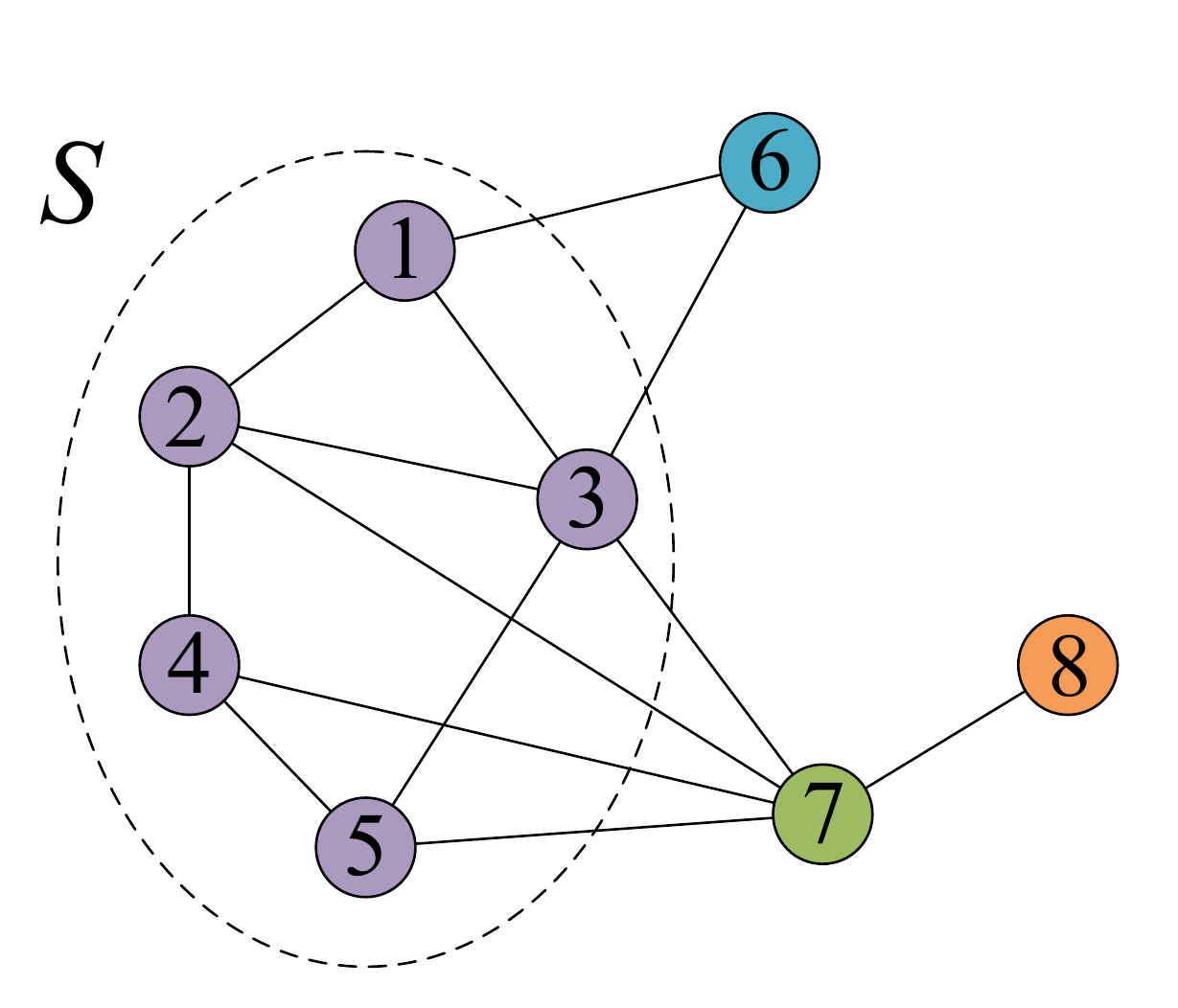}
	  \caption{ Node 7 has stronger intra-connection than node 6.}
	  \label{fig3}
\end{figure}

 According to the definitions introduced in section 3.1, $P(\mathcal{C}\mathcal{G})$, $P(\mathcal{C})$ and $P(\mathcal{G})$ are respectively decided by $\omega$, $d_u$ and $\epsilon$ since $N$ is fixed. As a result, the monotonicity of $P(\mathcal{C}\mathcal{G})$, $P(\mathcal{C})$ and $P(\mathcal{G})$ is respectively determined by $\omega$, $d_u$ and $\epsilon$ as well. \textbf{P2} and \textbf{P3} describe two features of the cohesion of $u$ with respect to community $S$: intra-connection and isolation. The intra-connection of node $u$ with respect to community $S$ indicates how node $u$ connects community $S$ densely, and it can be quantified by $\omega/\epsilon$. This is exemplified in Fig. 3, where node 7 connects four nodes of subgraph $S$ and node 6 connects two nodes of subgraph $S$. Despite node 6 connects subgraph $S$ with all its edges, node 7 has more links than node 6 within subgraph $S$.  Apparently, node 7 is more qualified for the member of community $S$. Note that the increase of $\omega$ will lead to the increase of  $\omega/\epsilon$ when $\epsilon$ and $d_u$ are both fixed. According to \textbf{P2}, $M$ should increase with the increment of  $\omega/\epsilon$, which means that strength of the intra-connection between $u$ and $S$ is becoming higher.
\begin{figure}[!ht]
    \centering
       \includegraphics[width=0.9\linewidth]{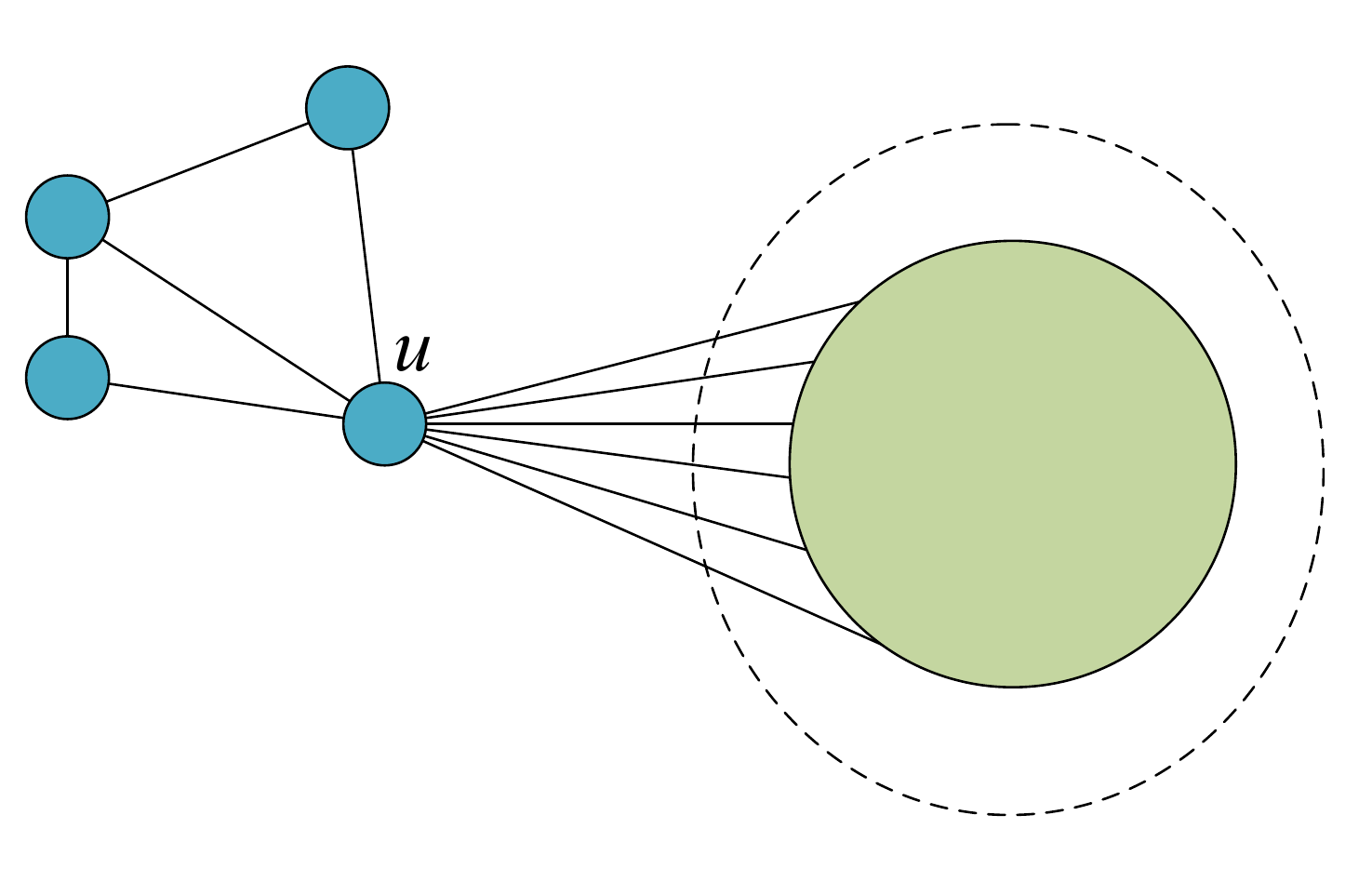}
	\centering	
 	\caption{ Node $u$ does not have strong isolation from the rest of graph which is circled by dash line.}
	  \label{fig4}
\end{figure}
 In a similar way for \textbf{P3}, $M$ should monotonically decrease with the increment of $\epsilon$ when $\omega$ and $d_u$ are both fixed since $\omega/\epsilon$ will decrease. On the other hand, it will lead to the biased result if we exclusively maximize the intra-connection. In Fig. 4, node $u$ have connections with all the left three nodes, but we cannot confidently infer that node $u$ and the left three nodes should be put together to form a community. This is because the number of links from $u$ connecting the rest of the graph is twice its number of links with the left three nodes. A node belonging to a community should have strong isolation from the rest of the graph. The isolation can be quantified by $\omega/d_u$, where the higher $\omega/d_u$ indicates that node $u$ connects external nodes sparsely. It is easy to see that the increment of $\omega$ will lead to the increment of $\omega/d_u$  when $\epsilon$ and $d_u$ are both fixed. In regard to \textbf{P2}, $M$ should increase with the increment of $\omega/d_u$, which means that the isolation of $u$ from the rest of the graph is becoming stronger. In a similar way for \textbf{P3}, $M$ should monotonically decrease with the  increment of $d_u$ when $\omega$ and $\epsilon$ are both fixed since $\omega/\epsilon$ will decrease.

To further analyze the relationship between correlation measure and community detection, let us consider the  confidence measure of $\mathcal{C}$ and $\mathcal{G}$. We have:
$$\textit{confidence}( \mathcal{C}\Rightarrow\mathcal{G} ) = P( \mathcal{G}|\mathcal{C}) = \frac{P(\mathcal{G}\mathcal{C})}{P(\mathcal{C})} = \frac{\omega}{d},$$
$$\textit{confidence}( \mathcal{G}\Rightarrow\mathcal{C} ) = P( \mathcal{C}|\mathcal{G}) = \frac{P(\mathcal{G}\mathcal{C})}{P(\mathcal{G})}= \frac{\omega}{\epsilon},$$
where $\textit{confidence}(\mathcal{C} \Rightarrow \mathcal{G})$ and  $\textit{confidence}(\mathcal{G}\Rightarrow\mathcal{C})$ are defined as  the neighborhood connectedness score (nb-score) of a node and  the community connectedness score (com-score) of a node in the Focs algorithm \cite{bandyopadhyay2015focs},  respectively. These two node-based measures are the theoretical basis of the Focs algorithm. This fact tells us that correlation analysis can be appropriately applied in community detection.

\subsubsection{Piatetsky-Shapiro's  Rule-Interest}

Piatetsky-Shapiro's rule-interest~\cite{PiatetskyShapiro1991DiscoveryAA} measures the difference between the true probability and the expected  probability    under the assumption of independence between $\mathcal{C}$ and $\mathcal{G}$. Piatetsky-Shapiro's rule-interest is appropriate for the task of community detection, and we will show that modularity can be explained under our framework when the correlation measure is Piatetsky-Shapiro's rule-interest later on. Our measurement function based on Piatetsky-Shapiro's rule-interest is calculated as:
$$  PS(u,S) = P(\mathcal{C}\mathcal{G}) - P(\mathcal{C})P(\mathcal{G}) = \frac{\omega}{N}-\frac{\epsilon d_u}{N^2}. \eqno{(5)}$$
Recalling the aforementioned variables in Table 2, the degree of vertex $u$ is $d_u$, the size of set $S\backslash \{ u \}$ is $\epsilon$ and the number of links between vertex $u$ and community $S$ is $\omega$. If we randomly select a node from the node set $V\backslash \{ u \}$, the probability that the selected node simultaneously connects  vertex $u$ and belongs to community $S$ is $\frac{\omega}{N}$. Similarly, a randomly selected node from  $V\backslash \{ u \}$ connects vertex $u$ with the probability $\frac{d_u}{N}$ and belongs to community $S$ with the probability  $\frac{\epsilon}{N}$.  In the case of random graph, the expected probability that the chosen node simultaneously connects vertex $u$ and belongs to community $S$ is $\frac{\epsilon}{N} \cdot \frac{d_u}{N}$.

The modularity function has a few variants, but they are all identical in nature. Without loss of generality, we will adopt the definition given in Equation (2) as our modularity function. Given a community $S$, the partial modularity of $S$ is reformulated as:
\begin{equation*}
    \begin{split}
        Q_S &= \frac{l_S}{m}- \Big( \frac{K_S}{2m} \Big)^2 = \frac{l_S}{m} - \frac{\sum_{j \in S}d_j}{2m} \cdot \frac{\sum_{i \in S}d_i}{2m},
    \end{split}
\end{equation*}
where $m$ is the number of the edges in graph $G$ and $l_S$ is the number of the edges inside community $S$. If we randomly choose an edge from the edge  set $E$, the probability of the chosen edge inside community $S$ is ${l_S}/{m}$. Likewise, the probability that one randomly chosen edge has at least one end inside community $S$ is $\frac{\sum_{j \in S}d_j}{2m}$. Then, the expected probability of the chosen edge inside community $S$ is $\frac{\sum_{j \in S}d_j}{2m} \cdot \frac{\sum_{i \in S}d_i}{2m}$ when $G$ is a random graph. Let $\mathcal{H}$ denotes a binary random variable where $\mathcal{H}=1$ if the randomly chosen edge has at least one end inside community $S$ and $\mathcal{H}=0$ otherwise. Then, the partial modularity $Q_S$ can be rewritten as: $P(\mathcal{H}\mathcal{H})-P(\mathcal{H})P(\mathcal{H})$. We can find that the partial modularity owns the same idea with our measurement function in (5), i.e., they are both the concrete examples of Piatetsky-Shaprio Rule-Interest. However, the derivation of the partial modularity and our function starts from two different perspectives respectively. The partial modularity is edge-centric, which utilizes the relationship between edge and community. Our formulation in (5) is node-centric, which  utilizes the relationship between node and community. For the sake of simplicity, we will use PS to denote our measurement function in (5) and we can also call it ``node modularity".

Despite the PS function can be used for evaluating the strength of the correlation between a node and a community, it is still not clear if it satisfies  aforementioned three properties. A thorough analysis is essential to confirm the utility of PS in the context of community detection.
\textbf{P2} and \textbf{P3} are necessary properties that a good community correlation measure should have. Essentially, these two properties can be interpreted as "a  higher value of $PS(u,S)$ indicates a strong correlation between $u$ and $S$". \textbf{P1} is concerned how far the correlation between $u$ and $S$ is away from what it is in the random graph. For the PS measure, we have the following two theorems, where Theorem 1 proves that P2 and P3 hold and  Theorem 2 proves that P1 is correct.
\begin{theorem}
For fixed $N>0$, we have $0<\epsilon<N$, $0<d_u<N$ and $0<\omega\leq min(\epsilon,d_u)$. Let $PS(u,S)$ be the correlation value between $u$ and $S$ defined in (5), then 1) $PS(u,S)$ monotonically increases with the increment of $\omega$ when  $\epsilon$ and $d_u$  are fixed, 2) $PS(u,S)$ monotonically decreases with the increment of $\epsilon$ when  $\omega$ and $d_u$ are fixed, and 3) $PS(u,S)$ monotonically decreases with the increment of $d_u$ when  $\epsilon$ and $\omega$  are fixed.
\end{theorem}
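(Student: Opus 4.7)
The plan is to treat $PS(u,S)=\omega/N-\epsilon d_u/N^2$ as a function of each of $\omega$, $\epsilon$, and $d_u$ in turn, holding the other two (as well as $N$) fixed, and observe that in every case the resulting one-variable function is affine. Monotonicity in that variable then follows immediately from the sign of the coefficient, and strict monotonicity follows once we invoke the hypotheses that force the coefficient to be nonzero.

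For (1), with $\epsilon$ and $d_u$ fixed, the map $\omega \mapsto PS(u,S)$ has slope $1/N$, which is strictly positive because $N>0$; hence $PS$ strictly increases with $\omega$. For (2), with $\omega$ and $d_u$ fixed, the map $\epsilon \mapsto PS(u,S)$ has slope $-d_u/N^2$, which is strictly negative since $d_u>0$ (guaranteed by $0<\omega\le d_u$); hence $PS$ strictly decreases with $\epsilon$. For (3), symmetrically, the map $d_u \mapsto PS(u,S)$ has slope $-\epsilon/N^2<0$ (using $\epsilon\ge\omega>0$); hence $PS$ strictly decreases with $d_u$.

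There is essentially no obstacle here: the whole argument is a one-line sign check per case on the partial derivatives (or, if one prefers a discrete formulation because $\omega,\epsilon,d_u$ are integer counts, on the forward differences $PS(\omega+1,\epsilon,d_u)-PS(\omega,\epsilon,d_u)=1/N$, $PS(\omega,\epsilon+1,d_u)-PS(\omega,\epsilon,d_u)=-d_u/N^2$, and $PS(\omega,\epsilon,d_u+1)-PS(\omega,\epsilon,d_u)=-\epsilon/N^2$). The only point requiring any care is to cite the stated constraints $N>0$, $d_u\ge\omega>0$, and $\epsilon\ge\omega>0$ so that each slope is strictly nonzero, yielding strict rather than merely weak monotonicity.
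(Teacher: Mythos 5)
Your proposal is correct and follows essentially the same route as the paper, which likewise computes the partial derivatives $\partial PS/\partial\omega = 1/N$, $\partial PS/\partial\epsilon = -d_u/N^2$, and $\partial PS/\partial d_u = -\epsilon/N^2$ and reads off the signs from the constraints $N>0$, $d_u>0$, $\epsilon>0$. Your additional remark about the discrete forward differences is a harmless reformulation of the same sign check.
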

\begin{proof}
Let $PS(u,S) = J(\omega,\epsilon,d_u)$. We will directly calculate the partial derivatives of $PS(u,S)$ regarding to $\epsilon$, $\omega$ and $d_u$.
\begin{equation*}
    \begin{split}
	&\frac{\partial J(\omega,\epsilon,d_u)}{\partial \omega} = \frac{1}{N}>0, \frac{\partial J(\omega,\epsilon,d_u)}{\partial \epsilon} = \frac{-d_u}{N^2}<0, \\
	& \frac{\partial J(\omega,\epsilon,d_u)}{\partial d_u} = \frac{-\epsilon}{N^2}<0.
    \end{split}
\end{equation*}
Overall, the monotonicity of $PS(u,S)$ is consistent with the monotonicity of $\omega$, $\epsilon$ and $d_u$ respectively.
\end{proof}
\begin{theorem}
Given the E-R model $G(n,m)$ with the probability $p=2m/(n(n-1))$ and let $PS(u,S)$ be the correlation value between $u$ and $S$, then $\mathbb{E}[PS(u,S)] = 0$ holds for any subgraph $S \in \mathcal{G}_S$ and node $u \in S$, where $\mathcal{G}_S$ is the set of all the subgraphs of $G$.
\end{theorem}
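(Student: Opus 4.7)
The plan is to compute $\mathbb{E}[PS(u,S)]$ directly by linearity of expectation, once the random and the deterministic quantities in (5) are carefully separated. Fix the vertex subset $S \subseteq V$ with $u \in S$; then both $N = n-1$ and $\epsilon = |S|-1$ are deterministic constants, and only $\omega$ and $d_u$ inherit randomness from the edge set under the E-R model.

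First I would rewrite $d_u$ and $\omega$ as sums of edge-indicator variables $X_{uv} = \mathbf{1}[(u,v)\in E]$, ranging respectively over $v \in V\setminus\{u\}$ and over $v \in S\setminus\{u\}$. Under the E-R model each $X_{uv}$ is marginally Bernoulli with parameter $p = 2m/(n(n-1))$, so linearity of expectation immediately gives
\begin{equation*}
\mathbb{E}[\omega] = \epsilon p, \qquad \mathbb{E}[d_u] = N p ;
\end{equation*}
note that independence of the edges is not even needed, only the marginal. Substituting into~(5) yields
\begin{equation*}
\mathbb{E}[PS(u,S)] \;=\; \frac{\mathbb{E}[\omega]}{N} \;-\; \frac{\epsilon\,\mathbb{E}[d_u]}{N^2} \;=\; \frac{\epsilon p}{N} \;-\; \frac{\epsilon \cdot N p}{N^2} \;=\; 0,
\end{equation*}
which is exactly the claim. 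Conceptually, this just confirms that $PS$ is engineered to measure the departure of $P(\mathcal{C}\mathcal{G})$ from the product $P(\mathcal{C})P(\mathcal{G})$ that one expects under independence, so property~P1 is hard-wired into the definition.

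There is no real obstacle in the argument; the computation is a one-line linearity calculation. The only subtlety I would flag is interpretational: the phrase ``any subgraph $S$'' should be read as ``any fixed choice of vertex subset $S$'', so that $\epsilon$ may be pulled outside the expectation as a constant coefficient and the random quantity $\epsilon\,d_u$ really reduces to $\epsilon \cdot \mathbb{E}[d_u]$. If one instead regarded $S$ as a random induced subgraph, one would simply condition on its vertex set and the identical computation would apply on the conditional measure, giving the same conclusion.
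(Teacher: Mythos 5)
Your proof is correct and follows essentially the same route as the paper's: both decompose $d_u$ and $\omega$ into Bernoulli edge indicators with mean $p$ and conclude by linearity of expectation that $\mathbb{E}[PS(u,S)] = \frac{\epsilon p}{N} - \frac{\epsilon N p}{N^2} = 0$. Your added remarks (that only the marginals are needed, and that $S$ must be read as a fixed vertex subset so $\epsilon$ is deterministic) are sound clarifications but do not change the argument.
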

\begin{proof}
Let $X_v$ be a random variable such that:
$$
X_v =
\begin{cases}
1  & \text{if $v$ has a link with $u$}, \\
0  & \text{otherwise}.
\end{cases}
$$
Clearly, $\mathbb{E}[X_v] = 1\cdot \mathbb{P}(X_v=1)+0\cdot \mathbb{P}(X_v=0) = p$, $d_u=\sum_{i=1}^{n}X_i$ and $\omega = \sum_{v \in S}X_v$. By the linearity of expectations,
\begin{equation*}
    \begin{split}
        \mathbb{E}[PS(u,S)] &= \mathbb{E}[\frac{\omega}{N}-\frac{\epsilon d_u}{N^2}] = \mathbb{E}[\frac{\omega}{N}]-\mathbb{E}[\frac{\epsilon d_u}{N^2}]\\
			    &= \frac{1}{N}\mathbb{E}[\sum_{v \in S}X_v]-\frac{\epsilon}{N^2}\mathbb{E}[\sum_{i=1}^{n}X_i]\\
			    &= \frac{1}{N}\sum_{v \in S}\mathbb{E}[X_v]-\frac{\epsilon}{N^2}\sum_{i=1}^{n}\mathbb{E}[X_i]\\
			    &= \frac{\epsilon p}{N}-\frac{\epsilon}{N^2} \cdot Np = 0.
    \end{split}
\end{equation*}
\end{proof}
Theorem 2 shows that PS can be regarded as the distance between the correlation value of $u$ and $S$ in a real network and the average correlation value of $u$ and $S$ over all the random networks. However, this theoretic result ignores the variance of PS variable over all the random networks. If the distribution of PS values is not strongly peaked, it is very likely that most PS values will far exceed zero even in the random networks. Thus, we should investigate the variance of PS values as well. Then, we have the following Lemma.
\begin{lemma}
Given the E-R model $G(n,m)$ with the probability $p=2m/(n(n-1))$.  Let  $\lambda =Np$, which is the expected degree under the E-R model. $PS(u,S)$ is correlation value between $u$ and $S$. Let $\mathcal{G}_S$ be the set of all the subgraphs of $G$. Then, $\forall \kappa>0$, $\forall S \in \mathcal{G}_S$ and $\forall u \in S$, we have the upper bound on the probability that PS(u, S) is no less than $\sqrt{\kappa}$:
\begin{equation*}
    \begin{split}
         &\mathbb{P} \Big[ \big| PS(u,S) \big| \geq \sqrt{\kappa} \Big] \\ &\leq \frac{1}{\kappa} \Big( \frac{ p^2\epsilon^2-p^2\epsilon+ p\epsilon }{N^2}+
							\frac{ \lambda\epsilon^2(\lambda-p+1) }{N^4} - \frac{ 2p\epsilon^2(\lambda+1) }{N^3} \Big).
    \end{split}
\end{equation*}
\end{lemma}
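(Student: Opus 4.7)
The plan is to apply Chebyshev's inequality in the form of Markov's inequality on the squared random variable $PS(u,S)^2$. Since Theorem 2 gives $\mathbb{E}[PS(u,S)] = 0$, we have $\mathbb{E}[PS(u,S)^2] = \mathrm{Var}(PS(u,S))$, and Markov's inequality yields
\begin{equation*}
\mathbb{P}\bigl[|PS(u,S)| \geq \sqrt{\kappa}\bigr] \;=\; \mathbb{P}\bigl[PS(u,S)^2 \geq \kappa\bigr] \;\leq\; \frac{\mathbb{E}[PS(u,S)^2]}{\kappa}.
\end{equation*}
So the entire task reduces to computing (or upper-bounding) the second moment of $PS(u,S)$ under the E-R model and showing it equals the parenthesized expression in the claim.

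Next, I would expand the square using the definition in (5):
\begin{equation*}
PS(u,S)^2 \;=\; \frac{\omega^2}{N^2} \;-\; \frac{2\epsilon\,\omega\, d_u}{N^3} \;+\; \frac{\epsilon^2 d_u^2}{N^4},
\end{equation*}
and take expectations term by term. Reusing the indicator decomposition from Theorem 2, I would write $\omega = \sum_{v\in S\setminus\{u\}} X_v$ (a sum of $\epsilon$ independent Bernoulli$(p)$'s) and $d_u = \omega + B$, where $B = \sum_{v\notin S} X_v$ is independent of $\omega$. Standard binomial moments give $\mathbb{E}[\omega^2] = \epsilon p(1-p)+\epsilon^2 p^2 = p^2\epsilon^2 - p^2\epsilon + p\epsilon$ and $\mathbb{E}[d_u^2] = Np(1-p)+N^2p^2 = \lambda(\lambda-p+1)$, matching the first and second summands of the claimed bound once divided by $N^2$ and $N^4$ respectively.

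The cross term is the only nontrivial piece, since $\omega$ and $d_u$ are dependent. Using the independence of $\omega$ and $B$, I would compute
\begin{equation*}
\mathbb{E}[\omega d_u] \;=\; \mathbb{E}[\omega^2] + \mathbb{E}[\omega]\,\mathbb{E}[B] \;=\; \epsilon p(1-p) + \epsilon^2 p^2 + \epsilon p (N-\epsilon)p,
\end{equation*}
which simplifies (after canceling the $\epsilon^2 p^2$ terms) to $\epsilon p(\lambda+1-p)$; discarding or absorbing the $-p$ correction yields the $-\tfrac{2p\epsilon^2(\lambda+1)}{N^3}$ term stated in the bound. Assembling the three contributions with the right $1/N^k$ prefactors gives exactly the expression in the lemma, completing the bound after dividing by $\kappa$.

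The main obstacle is bookkeeping rather than anything conceptual: the cross-term $\mathbb{E}[\omega d_u]$ is where the covariance between $\omega$ and $d_u$ lives, and a naive attempt to treat them as independent (or to replace $\mathbb{E}[\omega d_u]$ by $\mathbb{E}[\omega]\mathbb{E}[d_u]$) would give the wrong variance and a slightly wrong constant. The cleanest route is the decomposition $d_u = \omega + B$ with $\omega \perp B$, which isolates the dependence into a single $\mathbb{E}[\omega^2]$ term whose value is already needed for the first summand.
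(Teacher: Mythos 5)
Your proof is correct and follows essentially the same route as the paper: Chebyshev's inequality applied to $PS(u,S)^2$ using $\mathbb{E}[PS(u,S)]=0$ from Theorem 2, with the three second moments $\mathbb{E}[\omega^2]$, $\mathbb{E}[\omega d_u]$ and $\mathbb{E}[d_u^2]$ obtained from the same indicator decomposition (your $d_u=\omega+B$ split for the cross term is just tidier bookkeeping than the paper's direct double sum $\sum_i\sum_{j\in S}\mathbb{E}[X_iX_j]$). One remark: your value $\mathbb{E}[\omega d_u]=\epsilon p(\lambda+1-p)$ is actually the correct one --- the paper's $\epsilon\lambda p+p\epsilon$ over-counts the diagonal by $\epsilon p^2$ --- and since this term enters with a negative sign, ``discarding the $-p$'' shrinks the expression below the true variance, so the bound as stated in the lemma is short by the negligible amount $2p^3\epsilon^2/(N^3\kappa)$; this is a slip in the paper's own computation rather than a gap in your argument.
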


\begin{proof}
 Analogously to what we have done in the proof of Theorem 2, we will use the sum of random variable $X_{i}$ to denote $\omega$ and $d_u$:
$$d_u=\sum_{i=1}^{n}X_i,\omega = \sum_{v \in S}X_v.$$
Let $Y = PS(u,S)$, the second moment of $Y$ can be obtained by the linearity of expectations:
\begin{equation*}
    \begin{split}
        \mathbb{E}[Y^2] &= \mathbb{E}[ \Big( \frac{\omega}{N}-\frac{\epsilon d_u}{N^2} \Big)^2 ] = \mathbb{E}[\frac{\omega^2}{N^2}-\frac{2\epsilon d_u\omega}{N^3}+\frac{\epsilon^2 d_u^2}{N^4}]\\
		    &= \frac{1}{N^2}\mathbb{E}[\omega^2]-\frac{2\epsilon}{N^3}\mathbb{E}[d_u \omega]+\frac{\epsilon^2}{N^4}\mathbb{E}[{d_u}^2].
    \end{split}
\end{equation*}
Calculating $\mathbb{E}[\omega^2]$, $\mathbb{E}[d_u \omega]$ and $\mathbb{E}[{d_u}^2]$ independently, we have
\begin{equation*}
    \begin{split}
        \mathbb{E}[\omega^2] &= \mathbb{E}[ \Big( \sum_{i \in S}X_i \Big) \Big( \sum_{j \in S}X_j \Big) ] = \mathbb{E}[ \sum_{i \in S} \sum_{j \in S}X_iX_j]\\
			    &= \sum_{i \in S} \sum_{j \in S} \mathbb{E}[X_iX_j] = \epsilon(\epsilon-1)p^2+p\epsilon, \\
	   \mathbb{E}[d_u \omega] &= \mathbb{E}[ \Big( \sum_{i=1}^{n}X_i \Big) \Big( \sum_{j \in S}X_j \Big) ] = \mathbb{E}[\sum_{i=1}^{n} \sum_{j \in S} X_iX_j]\\
			    &= \sum_{i=1}^{n} \sum_{j \in S} \mathbb{E}[X_iX_j] =\epsilon Np^2+p\epsilon = \epsilon \lambda p + p\epsilon,\\
	   \mathbb{E}[{d_u}^2]  &= \mathbb{E}[ \Big( \sum_{i=1}^{n}X_i \Big) \Big( \sum_{j=1}^{n}X_j \Big) ] = \sum_{i=1}^{n}\sum_{j=1}^{n} \mathbb{E}[X_iX_j]\\
			    &= N(N-1)p^2 + Np = \lambda(\lambda-p+1).
    \end{split}
\end{equation*}
To obtain the variance of $Y$, we first apply  Theorem 2:
\begin{equation*}
    \begin{split}
     Var[Y] &=  \mathbb{E}[Y^2] - (\mathbb{E}[Y])^2 =  \mathbb{E}[Y^2]\\
	&= \frac{ \epsilon(\epsilon-1)p^2+p\epsilon }{N^2}-\frac{ 2p\epsilon^2(\lambda+1) }{N^3} + \frac{ \lambda\epsilon^2(\lambda-p+1) }{N^4}.
    \end{split}
\end{equation*}
Then, by the Chebyshev inequality, we finish the proof:
\begin{equation*}
    \begin{split}
    & \mathbb{P}\Big[ \big| PS(u,S) \big| \geq \sqrt{\kappa} \Big] = \mathbb{P}\Big[ \big|Y-\mathbb{E}[Y]\big| \geq \sqrt{\kappa} \Big] \leq \frac{ Var[Y] }{\kappa}\\
    &=\frac{1}{\kappa} \Big(  \frac{ \epsilon(\epsilon-1)p^2+p\epsilon }{N^2}-\frac{ 2p\epsilon^2(\lambda+1) }{N^3} + \frac{ \lambda\epsilon^2(\lambda-p+1) }{N^4} \Big) .
    \end{split}
\end{equation*}
\end{proof}
Most large-scale real networks appear to be sparse~\cite{wang2003complex}~\cite{albert2002statistical}~\cite{del2011all}, in which the number of edges is generally the order $n$ rather than $n^2$~\cite{albert2002statistical}. In addition, sparse graphs are particularly sensitive to random fluctuations~\cite{fortunato2016community}. Therefore, we should concentrate more on the performance of PS on the sparse graph. With respect to the sparsity under the E-R model, we have the following Theorem.
\begin{theorem}
Given the ER model $G(n,m)$ with the probability $p=2m/(n(n-1))$ and $\lambda =Np$ is the expected degree under E-R model. Assuming $\lambda$ always remains finite in the limit of infinite size~\cite{fortunato2016community}. $PS(u,S)$ is the correlation value between $u$ and $S$. Then, $\forall \kappa>0$, any subgraph $S$ and $\forall u \in S$, we have:
$$
         \lim_{N\to+\infty, p \to 0} \mathbb{P} \Big[ \big| PS(u,S) \big| \geq \kappa \Big] = 0.
$$
\end{theorem}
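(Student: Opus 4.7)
The plan is to invoke Lemma 1 directly, reduce it to a variance bound that is uniform over all subgraphs $S$ and all $u \in S$, and then show each term in that bound vanishes in the stated limit. Concretely, Lemma 1 states an inequality for $\mathbb{P}[|PS(u,S)| \geq \sqrt{\kappa'}]$; substituting $\kappa' = \kappa^{2}$ gives
\begin{equation*}
\mathbb{P}\bigl[|PS(u,S)| \geq \kappa\bigr] \leq \frac{1}{\kappa^{2}}\Bigl(\tfrac{p^{2}\epsilon^{2}-p^{2}\epsilon+p\epsilon}{N^{2}} + \tfrac{\lambda\epsilon^{2}(\lambda-p+1)}{N^{4}} - \tfrac{2p\epsilon^{2}(\lambda+1)}{N^{3}}\Bigr).
\end{equation*}
Since $\kappa$ is fixed, it is enough to show that the parenthesized expression tends to $0$ as $N\to\infty$ with $p = \lambda/N$ and $\lambda$ bounded, uniformly over admissible $\epsilon$ and $S$.

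Next I would exploit the fact that any admissible community satisfies $\epsilon < N$, which instantly gives $\epsilon^{2}/N^{2} \leq 1$ and $\epsilon/N^{2} \leq 1/N$. Substituting $p = \lambda/N$, I would bound each term separately: the first contributes at most $p^{2}+p^{2}/N+p/N = \lambda^{2}/N^{2}+\lambda^{2}/N^{3}+\lambda/N^{2}$; the second contributes at most $\lambda(\lambda+1)/N^{2}$ (using $\lambda - p + 1 \leq \lambda + 1$); and the (negative) third term is dominated in absolute value by $2\lambda(\lambda+1)/N^{2}$. Because $\lambda$ is kept finite in the limit, every one of these bounds is $O(1/N^{2})$ or smaller, hence tends to zero. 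Adding them and multiplying by $1/\kappa^{2}$ yields the conclusion.

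The main conceptual point, rather than the main obstacle, is to verify that the bound is genuinely uniform over all subgraphs $S \in \mathcal{G}_{S}$ and all $u \in S$: the only quantities in the Lemma 1 bound that depend on $S$ and $u$ are $\epsilon$ (the size of $S\setminus\{u\}$) and $d_u$, and $d_u$ was already absorbed into $\lambda$ via the expectation calculation, while $\epsilon$ is harmlessly bounded by $N$. Thus no case analysis on $S$ is needed. The one mild subtlety is the sign of the middle term: since $\text{Var}[Y]\geq 0$, the whole parenthesized quantity is nonnegative, so I can freely upper bound it by replacing the negative summand with $0$ and then applying triangle-style bounds on the remaining positive terms. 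Taking $N\to\infty$ completes the proof.
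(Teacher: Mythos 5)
Your proposal is correct and follows essentially the same route as the paper: invoke Lemma 1 with the substitution $\kappa \mapsto \kappa^{2}$, discard (or dominate) the negative middle term, bound $\epsilon$ by $N$ and write $p=\lambda/N$ so that every surviving term is $O(1/N^{2})$ with $\lambda$ bounded, then let $N\to\infty$. Your explicit remark that the bound is uniform in $S$ and $u$ because only $\epsilon<N$ enters is a nice touch the paper leaves implicit, but it does not change the argument.
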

\begin{proof}
We first apply Lemma 1:
\begin{equation*}
    \begin{split}
    & \mathbb{P} \Big[ \big| PS(u,S) \big| \geq \kappa \Big] \\
    &\leq\frac{1}{ {\kappa}^2 } \Big(  \frac{ \epsilon(\epsilon-1)p^2+p\epsilon }{N^2}-\frac{ 2p\epsilon^2(\lambda+1) }{N^3} + \frac{ \lambda\epsilon^2(\lambda-p+1) }{N^4} \Big) \\
    &\leq\frac{1}{{\kappa}^2 } \Big(  \frac{ \epsilon(\epsilon-1)p^2+p\epsilon }{N^2}+ \frac{ \lambda\epsilon^2(\lambda-p+1) }{N^4} \Big)\\
    &\leq\frac{1}{{\kappa}^2 } \Big(  \frac{ (Np)^2+pN }{N^2}+ \frac{ \lambda (\lambda-p+1) }{N^2} \Big)\\
    & = \frac{1}{{\kappa}^2 } \Big(  \frac{ 2(\lambda)^2+2\lambda-\lambda p }{N^2} \Big).
    \end{split}
\end{equation*}
Then, we take the limit:
\begin{equation*}
    \begin{split}
      &\lim_{N\to+\infty, p \to 0} \mathbb{P} \Big[ \big| PS(u,S) \big| \geq \kappa \Big] \\
      &\leq  \lim_{N\to+\infty, p \to 0} \frac{1}{{\kappa}^2 } \Big(  \frac{ 2(\lambda)^2+2\lambda-\lambda p }{N^2} \Big)= 0.
    \end{split}
\end{equation*}
\end{proof}
Theorem 3 shows that it is difficult to obtain a high PS value in the large sparse random network. This theoretic result indicates that the distribution of PS values is strongly peaked, that is, most PS values are near by zero. In fact,  even in the small random network, the upper bound introduced in Lemma 1 is often a small value. For example, let $\kappa = 0.0001$, $N=280$, $\epsilon = 20$ and $\lambda = 8$, then we have $Pr \big[ | PS(u,S) | \geq 0.01 \big] \leq 6.54\%$. Overall, the high PS value between $u$ and $S$ in the real sparse network is an indicator that there is a significant correlation between $u$ and $S$. This provides us a theoretical basis for our community detection algorithm.

To formulate a vertex-centric metric of a community, we have:
$$ F(S) = \sum_{u\in S}PS(u,S) =\frac{2l_S}{N}-\frac{\epsilon K_S}{N^2},        \eqno{(6)}$$
where  $l_S$ is the number of the edges inside community $S$ and  $K_S$ represents the sum of degrees of all vertexes in community $S$. Then, the objective function for a community partition is:
$$ \Gamma (P) = \sum_{j=1}^{\mathcal{M}} F(S_{j}) = \sum_{j=1}^{\mathcal{M}} \Big(\frac{2l_{S_j}}{N}- \frac{\epsilon K_{S_j}}{N^2} \Big).        \eqno{(7)}$$

\subsubsection{$\phi$-Coefficient}
$\phi$-Coefficient~\cite{Agresti2003Categorical} is a variant of Pearson's Product-moment Correlation Coefficient for binary variables. In association mining, it is often used to estimate whether there is a non-random pattern. Our measurement function based on $\phi$-Coefficient is calculated as:
\begin{align*}
   \phi(u,S) =& \frac{P(\mathcal{C}\mathcal{G}) - P(\mathcal{C})P(\mathcal{G}) }{ \sqrt{P(\mathcal{C})(1-P(\mathcal{C}))P(\mathcal{G})(1-P(\mathcal{G}))}}  \\
  =&\frac{\omega N - \epsilon d_u }{ \sqrt{\epsilon(N-\epsilon) d_u(N-d_u)} }. \tag{$8$}
\end{align*}
A positive $\phi(u,S)$ value indicates that node $u$ has denser intra-connection with community $S$ and stronger isolation from the rest of graph. Moreover, $\phi(u,S)$ has the same range of values as Pearson's Product-moment Correlation Coefficient, i.e. $-1\leq \phi(u,S) \leq1$. When $\phi(u,S) = 1$, node $u$ will have all its links connecting all the members of community $S$ and have no other links with the rest of graph. The above discussions are subjective and intuitive. To further analyze $\phi(u,S)$, we have some similar theoretic results as PS owns.
\begin{theorem}
For fixed $N>0$, we have $0<\epsilon<N$, $0<d_u<N$ and $0<\omega\leq min(\epsilon,d_u)$. Let $\phi(u,S)$ be $\phi$-Coefficient value between $u$ and $S$ defined in (8), then 1) $\phi(u,S)$ monotonically increases with the increment of $\omega$ when  $\epsilon$ and $d_u$  are fixed, 2) $\phi(u,S)$ monotonically decreases with the increment of $\epsilon$ when  $\omega$ and $d_u$ are fixed, and 3) $\phi(u,S)$ monotonically decreases with the increase of $d_u$ when  $\epsilon$ and $\omega$  are fixed.
\end{theorem}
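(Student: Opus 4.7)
The plan is to mirror the structure of the proof of Theorem 1, but since $\phi(u,S)$ now carries a non-trivial denominator $\sqrt{\epsilon(N-\epsilon)\,d_u(N-d_u)}$, the partial derivatives no longer have obviously constant sign, and we will have to exploit the constraint $0<\omega\le\min(\epsilon,d_u)$ to pin down the signs.

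Part 1 (monotonicity in $\omega$) is immediate: with $\epsilon$ and $d_u$ held fixed, the denominator of $\phi(u,S)$ is a positive constant and the numerator $\omega N-\epsilon d_u$ is linear in $\omega$ with positive slope $N$, so $\partial\phi/\partial\omega=N/\sqrt{\epsilon(N-\epsilon)d_u(N-d_u)}>0$.

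Part 2 (monotonicity in $\epsilon$) is the main step. I would set $c=\sqrt{d_u(N-d_u)}$, which is constant in $\epsilon$, and write
\[
\phi(u,S)=\frac{1}{c}\cdot\frac{\omega N-\epsilon d_u}{\sqrt{\epsilon(N-\epsilon)}}.
\]
Applying the quotient rule and clearing the positive factor $2\epsilon(N-\epsilon)\sqrt{\epsilon(N-\epsilon)}$ from the denominator, the sign of $\partial\phi/\partial\epsilon$ matches the sign of
\[
-2d_u\,\epsilon(N-\epsilon)-(\omega N-\epsilon d_u)(N-2\epsilon),
\]
which after expansion collapses to $N\bigl(\epsilon(2\omega-d_u)-\omega N\bigr)$. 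The task is then to show $\epsilon(2\omega-d_u)-\omega N<0$. If $2\omega-d_u\le 0$ this is obvious since $\omega N>0$. If $2\omega-d_u>0$, then because $\omega\le d_u$ we have $2\omega-d_u\le \omega$, so $\epsilon(2\omega-d_u)\le\epsilon\omega<\omega N$, where the last strict inequality uses $\epsilon<N$ and $\omega>0$. Either way, the partial derivative is strictly negative.

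Part 3 (monotonicity in $d_u$) follows from the symmetry of $\phi(u,S)$ under the swap $\epsilon\leftrightarrow d_u$. Since the formula
\[
\phi(u,S)=\frac{\omega N-\epsilon d_u}{\sqrt{\epsilon(N-\epsilon)\,d_u(N-d_u)}}
\]
is invariant under this interchange, the computation of part 2 applies verbatim with $\epsilon$ and $d_u$ swapped, and the required bound now uses $\omega\le\epsilon$ in place of $\omega\le d_u$. The hard part of the proof is the algebraic simplification in part 2 that reduces a messy derivative to the compact expression $N(\epsilon(2\omega-d_u)-\omega N)$; once this is in hand, the constraint $\omega\le\min(\epsilon,d_u)$ does the rest, and the symmetry observation avoids duplicating the work for part 3.
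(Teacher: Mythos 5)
Your proposal is correct and follows essentially the same route as the paper: explicit partial derivatives, reduction of the $\epsilon$-derivative's sign to the expression $N(\epsilon(2\omega-d_u)-\omega N)$ (the paper's $-N[\epsilon(d_u-2\omega)+\omega N]$), the same two-case analysis on the sign of $2\omega-d_u$ using $\omega\le\min(\epsilon,d_u)$ and $\epsilon,d_u<N$, and the $\epsilon\leftrightarrow d_u$ symmetry for part 3 (which the paper handles with ``in a similar way''). No gaps.
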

\begin{proof}
Let $\phi(u,S) = J(\omega,\epsilon,d_u)$. We will directly calculate the partial derivatives of $\phi(u,S)$ regarding to $\epsilon$, $\omega$ and $d_u$.
\begin{equation*}
    \begin{split}
	&\frac{\partial J(\omega,\epsilon,d_u)}{\partial \omega} = \frac{N}{\sqrt{\epsilon(N-\epsilon) d_u(N-d_u)}}>0\\
	& \frac{\partial J(\omega,\epsilon,d_u)}{\partial \epsilon} = {\big( d_u(N-d_u) \big) }^{ - \frac{1}{2}} \cdot \frac{-N\big[\epsilon(d_u-2\omega)+\omega N \big]}{2{\big( \epsilon(N-\epsilon) \big) }^{ \frac{3}{2}}} \\
	& \frac{\partial J(\omega,\epsilon,d_u)}{\partial d_u} = {\big( \epsilon(N-\epsilon) \big) }^{ - \frac{1}{2}} \cdot \frac{-N\big[d_u(\epsilon-2\omega)+\omega N \big]}{2{\big( d_u(N-d_u) \big) }^{ \frac{3}{2}}}.
    \end{split}
\end{equation*}
We only need to focus on the term $-N\big[d_u(\epsilon-2\omega)+\omega N \big]$. This term is obviously negative  when $\epsilon \geq 2\omega$. Now assuming $\epsilon<2\omega$. Since $\omega \leq \epsilon$, we have $2\omega - \epsilon \leq \omega$. Then,
\begin{equation*}
    \begin{split}
	-N\big[d_u(\epsilon-2\omega)+\omega N \big] &= -N\big[\omega N - d_u(2\omega-\epsilon)\big]\\
									 &\leq -N\big(\omega N - d_u\omega\big) \\
 									 &< -N\big(\omega N - N\omega\big) = 0.
    \end{split}
\end{equation*}
Thus, $\frac{\partial J(\omega,\epsilon,d_u)}{\partial \epsilon}<0$ holds. In a similar way, $\frac{\partial J(\omega,\epsilon,d_u)}{\partial d_u}<0$ holds as well. Overall, the monotonicity of $\phi(u,S)$ is consistent with the monotonicity of $\omega$, $\epsilon$ and $d_u$ respectively.
\end{proof}

\begin{theorem}
Given the E-R model $G(n,m)$ with the probability $p=2m/(n(n-1))$ and let $\phi(u,S)$ be $\phi$-Coefficient value between $u$ and $S$. For fixed $0<\epsilon<N$, $\mathbb{E}[\phi(u,S)] = 0$ holds for any subgraph $S \in \mathcal{G}_S$ and node $u \in S$, where $\mathcal{G}_S$ is the set of all the subgraphs of $G$.
\end{theorem}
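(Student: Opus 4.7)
The plan is to condition on the random degree $d_u$ and exploit the exchangeability of the edge indicators from $u$. Unlike the PS metric in Theorem 2, where linearity of expectation alone sufficed, here the denominator $\sqrt{\epsilon(N-\epsilon)d_u(N-d_u)}$ is itself a function of the random variable $d_u$, so we cannot simply distribute the expectation across the ratio. The key observation is that once we fix $d_u$, the denominator becomes deterministic and can be pulled out of the conditional expectation, reducing the problem to showing that $\mathbb{E}[\omega N - \epsilon d_u \mid d_u] = 0$.

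First I would reuse the variables from the proof of Theorem 2, writing $d_u = \sum_{i\neq u} X_i$ and $\omega = \sum_{v \in S\setminus\{u\}} X_v$, where each $X_i$ is an i.i.d.\ $\mathrm{Bernoulli}(p)$ edge indicator. Since the $X_i$ are exchangeable, the conditional distribution of the configuration $(X_i)_{i\neq u}$ given $\sum_{i\neq u} X_i = k$ is uniform over all $\binom{N}{k}$ binary strings of Hamming weight $k$. Therefore, conditional on $d_u = k$, the count $\omega$ of ``on'' indicators falling inside the fixed subset $S\setminus\{u\}$ of size $\epsilon$ is hypergeometric with parameters $(N,\epsilon,k)$, and in particular $\mathbb{E}[\omega \mid d_u = k] = k\epsilon/N$. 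Substituting gives $\mathbb{E}[\omega N - \epsilon d_u \mid d_u = k] = k\epsilon - \epsilon k = 0$ for every admissible value of $k$.

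Next I would apply the tower rule. For $1 \le d_u \le N-1$, the denominator is strictly positive and $\sigma(d_u)$-measurable, so
\begin{equation*}
\mathbb{E}\!\left[\phi(u,S) \,\middle|\, d_u\right] \;=\; \frac{\mathbb{E}[\omega N - \epsilon d_u \mid d_u]}{\sqrt{\epsilon(N-\epsilon)d_u(N-d_u)}} \;=\; 0.
\end{equation*}
The degenerate cases $d_u = 0$ and $d_u = N$ force $\omega = 0$ and $\omega = \epsilon$ respectively, making both numerator and denominator vanish; by the standard convention (and consistent with the interpretation that a node with no edges, or all edges, carries no correlation signal), $\phi(u,S)$ is set to $0$ on these events, so they contribute nothing. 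Taking outer expectation yields $\mathbb{E}[\phi(u,S)] = 0$.

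The main obstacle, and the step most in need of care, is the justification of the hypergeometric conditional distribution: one must argue from the independence and identical distribution of the $X_i$ (for $i\neq u$) that conditioning only on their sum produces a uniform distribution over the configurations of that weight, which is exactly the ingredient that allows the random denominator to decouple from the expectation. Once this symmetry argument is in place, the rest is bookkeeping and the treatment of the two boundary cases where $\phi$ is defined by convention.
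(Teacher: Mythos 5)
Your proof is correct, but it takes a genuinely different route from the paper's. The paper introduces the auxiliary variable $Y=\bigl(d_u(N-d_u)\bigr)^{-1/2}$, writes $\phi(u,S)=\mathcal{K}\,(\omega N-\epsilon d_u)Y$ with $\mathcal{K}=\bigl(\epsilon(N-\epsilon)\bigr)^{-1/2}$, and observes that $\mathbb{E}[X_vY]=p\,\mathbb{E}[Y\mid X_v=1]$ is the \emph{same} constant for every $v$ by the i.i.d.\ structure of the edge indicators; summing over $v\in S\setminus\{u\}$ and over all $v$ then gives $\mathcal{K}\,\mathbb{E}[H_V]\,(N\epsilon p-\epsilon Np)=0$. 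You instead condition on the sum $d_u$, note that the denominator becomes $\sigma(d_u)$-measurable and hence deterministic, and use that the conditional law of the indicators given $d_u=k$ is uniform over weight-$k$ configurations, so $\omega\mid d_u=k$ is hypergeometric with mean $k\epsilon/N$ and the numerator's conditional expectation vanishes identically; the tower rule finishes. Both arguments rest on exchangeability, but your decomposition buys something concrete: it forces you to confront the events $d_u=0$ and $d_u=N$, on which $\phi(u,S)$ is a $0/0$ expression and which occur with positive probability $\bigl((1-p)^N$ and $p^N\bigr)$ under the E-R model; you resolve this by the convention $\phi=0$ there. The paper's proof silently treats $Y$ as well defined everywhere, which is a (minor but real) gap that your version repairs. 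Conversely, the paper's route needs only the weaker symmetry fact that $\mathbb{E}[Y\mid X_v=1]$ is independent of $v$, without invoking the hypergeometric distribution explicitly.
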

\begin{proof}
Firstly, we introduce a random variable:
$$Y=\frac{1}{\sqrt{d_u(N-d_u)}}.$$
At the same time, let $X_v$ be a random variable such that:
$$
X_v =
\begin{cases}
1 & \text{if $v$ has a link with $u$}, \\
0  & \text{otherwise}.
\end{cases}
$$
Integrating these two random variables, we have:
\begin{equation*}
    \begin{split}
		\mathbb{E}[ X_vY ] &= \mathbb{E}[ Y|X_v=1 ] \cdot \mathbb{P}(X_v=1)+ 0 \cdot \mathbb{P}(X_v=0)\\
			 &= p\cdot \mathbb{E}\Big[\frac{1}{\sqrt{d_u(N-d_u)}}|X_v=1 \Big] = \mathbb{E}[H_v]p,
    \end{split}
\end{equation*}
where $\mathbb{E}[H_v] = \mathbb{E}[Y|X_v=1]$. Besides, we have $d_u=\sum_{i=1}^{n}X_i$ and $\omega = \sum_{v \in S}X_v$. Let $\mathcal{K} = {\big( \epsilon(N-\epsilon) \big) }^{ - \frac{1}{2}}$.  By the linearity of expectations,
\begin{equation*}
    \begin{split}
        \mathbb{E}[\phi(u,S)] &= \mathbb{E}[\frac{\omega N - \epsilon d_u }{ \sqrt{\epsilon(N-\epsilon) d_u(N-d_u)} }]\\
			    &= \mathcal{K}  \cdot \mathbb{E}[(\omega N - \epsilon d_u )Y]\\
			    &= \mathcal{K}  \cdot \mathbb{E}[(N\sum_{v \in S}X_v - \epsilon \sum_{i=1}^{n}X_i ){ Y }]\\
			    &= \mathcal{K}  \cdot \Big( N \cdot \mathbb{E}[Y{\sum_{v \in S}X_v}] - \epsilon  \cdot \mathbb{E}[Y{ \sum_{i=1}^{n}X_i }] \Big) \\
			    &= \mathcal{K}  \cdot \mathbb{E}[H_V]\cdot \Big( N\epsilon p  -\epsilon Np \Big) = 0 .
    \end{split}
\end{equation*}
\end{proof}
Theorem 4 and Theorem 5 indicate that $\phi(u,S)$ is a good community measurement function since it owns the necessary properties \textbf{P1},\textbf{P2} and \textbf{P3}. This is  the theoretic basis for $\phi(u,S)$ to be employed in community detection. Now we can take the sum of $\phi(u,S)$ over all $u \in S$ to formulate a vertex-centric metric:
\begin{align*}
   \Phi(S) &= \sum_{u\in S}\phi(u,S) = \sum_{u\in S}\frac{\omega N - \epsilon d_u }{ \sqrt{\epsilon(N-\epsilon) d_u(N-d_u)} } \\
     		&=  {\big( \epsilon(N-\epsilon) \big) }^{ - \frac{1}{2}}\sum_{u\in S}\frac{\omega N - \epsilon d_u }{ \sqrt{d_u(N-d_u)} }. \tag{$9$}
\end{align*}

Despite different definitions on what a community should be have been proposed, some general ideas are widely accepted by most scholars. One of them is that the clique can be regarded as a perfect community. Thus, the loosely connected cliques should be separated from each other as different communities. The modularity function may fail to detect modules which are smaller than a scale in large networks, which is called the resolution limit problem~\cite{Fortunato2007Resolution}~\cite{Lancichinetti2011Limits}. A more general example in real network has been given in~\cite{Fortunato2007Resolution}, which is shown in Fig. 5. There are two cliques of the same size $\mathcal{C}_s$ and $\mathcal{C}_t$, and there is an edge connecting one node $u$ from $\mathcal{C}_s$ and another node $v$ from $\mathcal{C}_t$.
\begin{figure}[!ht]
    \centering
       \begin{overpic}[scale=0.5]{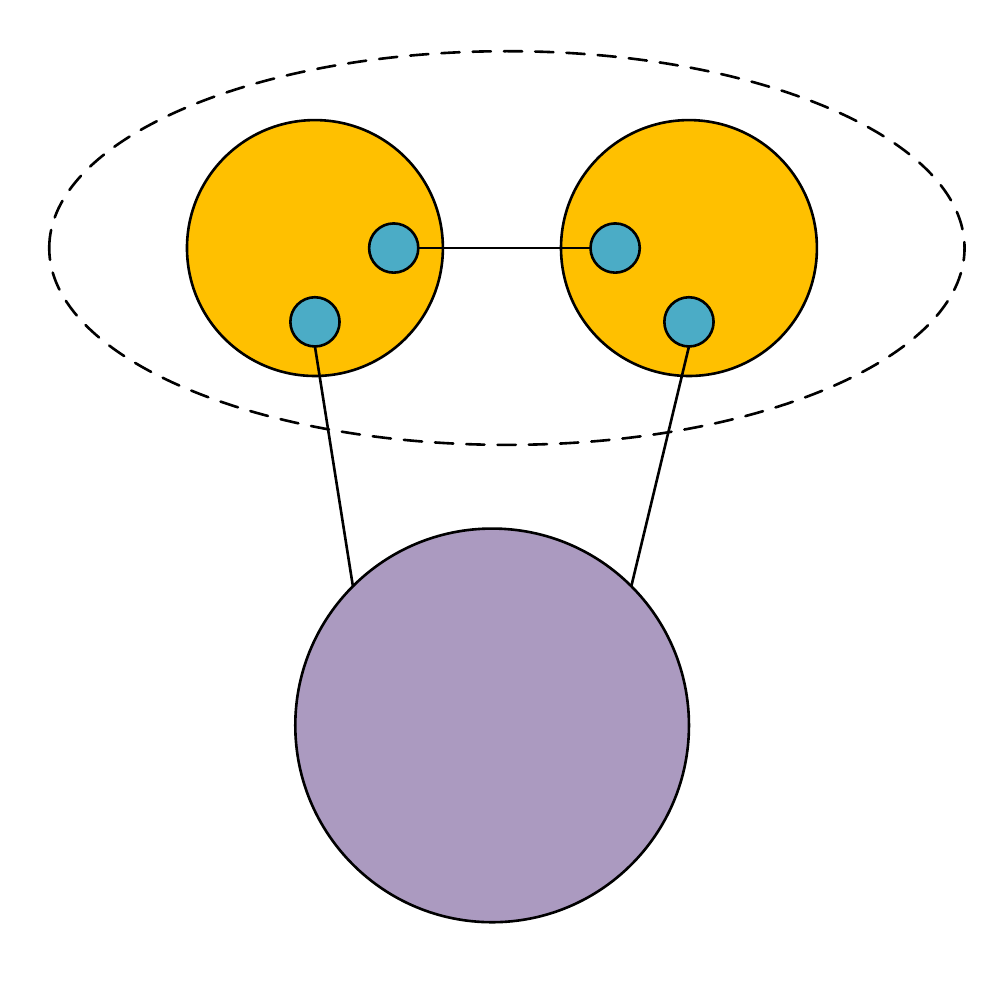}
			\put(10,70){\scalebox{1.3}{$\mathcal{C}_s$}}
			\put(85,70){\scalebox{1.3}{$\mathcal{C}_t$}}
			\put(36,78){\scalebox{1.3}{$u$}}
			\put(60,78){\scalebox{1.3}{$v$}}
			\put(69,70){\scalebox{1.3}{${o}$}}
			\put(30,70){\scalebox{1.3}{${c}$}}
		\end{overpic}
	  \caption{ A network with two equal-sized cliques $\mathcal{C}_s$ and $\mathcal{C}_t$, in which there is only one edge between these two cliques. In addition, there is only one edge between each clique and the rest of graph. If network is large enough, modularity optimization will merge these two small modules into one community.}
	  \label{fig5}
\end{figure}
When the network is large enough, the modularity-based algorithms prefer to merge these two cliques to form a bigger community since this operation will help increase the value of modularity function. Apparently, it is counter-intuitive and disturbing. Fortunately, we can prove that such issue does not occur in $\Phi(S)$.
\begin{theorem}
Given a graph $G(n,m)$ with a subgraph which consists of two equal-sized cliques $\mathcal{C}_s$ and $\mathcal{C}_t$. There is only one edge connecting $u$ and $v$, where $u\in \mathcal{C}_s $ and $v\in \mathcal{C}_t $. Let $P = \mathcal{C}_s \bigcup \mathcal{C}_t$ and $\mathcal{I}=|\mathcal{C}_s|=|\mathcal{C}_t|$. Two nodes $c$ and $o$, where $c\in \mathcal{C}_s$ and $o\in \mathcal{C}_t$, both have one edge connecting the rest of graph respectively. Assuming $5\leq \mathcal{I}<\frac{n}{4}$, then we have $\Phi(\mathcal{C}_s)+\Phi(\mathcal{C}_t)>\Phi(P)$.
\end{theorem}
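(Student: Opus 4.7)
The plan is to compute $\Phi(\mathcal{C}_s)$, $\Phi(\mathcal{C}_t)$ and $\Phi(P)$ in closed form from~(9) by grouping the members of each community into equivalence classes having identical $(\omega,d_u)$ profiles, and then to compare the resulting closed forms.

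First I would classify the vertices of $\mathcal{C}_s$ into two types. The $\mathcal{I}-2$ \emph{interior} clique vertices (everyone except $u$ and $c$) satisfy $d_w=\omega=\epsilon=\mathcal{I}-1$, so the numerator $\omega N-\epsilon d_w$ cancels against the denominator $\sqrt{\epsilon(N-\epsilon)d_w(N-d_w)}$ and each such vertex contributes exactly $1$ to $\Phi(\mathcal{C}_s)$. The two boundary vertices $u$ and $c$ share $d=\mathcal{I}$ and $\omega=\mathcal{I}-1$ (each spends one edge outside the clique), giving identical surd contributions. After simplification,
\[
\Phi(\mathcal{C}_s)=(\mathcal{I}-2)+2\sqrt{\frac{(\mathcal{I}-1)(n-\mathcal{I}-1)}{\mathcal{I}(n-\mathcal{I})}},
\]
and by symmetry $\Phi(\mathcal{C}_t)=\Phi(\mathcal{C}_s)$. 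For $P$, the size is $2\mathcal{I}$ so $\epsilon=2\mathcal{I}-1$, and the vertices split into three classes: $2(\mathcal{I}-2)$ interior vertices (now with $\omega=\mathcal{I}-1<\epsilon$, so $\phi<1$), two \emph{absorbed} bridge vertices $u,v$ (with $\omega=\mathcal{I}$, $d=\mathcal{I}$), and two external-edge vertices $c,o$ (with $\omega=\mathcal{I}-1$, $d=\mathcal{I}$). Substitution into~(9) then yields $\Phi(P)$ as an explicit three-term sum of surds.

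The decisive structural observation for the comparison is the interior-vertex contrast: each interior vertex contributed $1$ inside $\mathcal{C}_s$ but contributes only $\sqrt{(\mathcal{I}-1)(n-2\mathcal{I})/((2\mathcal{I}-1)(n-\mathcal{I}))}$ inside $P$, and the inequalities $(\mathcal{I}-1)/(2\mathcal{I}-1)<1/2$ and $(n-2\mathcal{I})/(n-\mathcal{I})<1$ show this factor is strictly less than $1/\sqrt{2}$. Hence the split configuration accumulates a surplus of at least $(1-1/\sqrt{2})\cdot 2(\mathcal{I}-2)$ from the interior vertices alone. All remaining contributions on either side are $O(1)$ in $\mathcal{I}$ and $n$, so one expects the surplus to dominate whenever $\mathcal{I}$ is not too small.

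The main obstacle is the fact that the bridge vertices $u,v$ actually \emph{gain} mass upon merging, because their $\omega$ rises from $\mathcal{I}-1$ to $\mathcal{I}$, and one must verify that this gain together with the $c,o$ contributions in $\Phi(P)$ cannot offset the interior deficit. I would handle this by bounding each of the four non-interior terms of $\Phi(P)$ from above using the hypothesis $\mathcal{I}<n/4$ (which gives $n-\mathcal{I}>3n/4$ and $n-2\mathcal{I}>n/2$, keeping every denominator uniformly bounded away from zero), and the two non-interior terms of $\Phi(\mathcal{C}_s)$ from below; the surviving scalar inequality, once the square roots are cleared, reduces to a low-degree polynomial inequality in $\mathcal{I}$ and $n$ that the hypothesis $\mathcal{I}\ge 5$ is designed to close off against small-case anomalies.
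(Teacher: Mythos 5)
Your proposal is correct and reaches the conclusion by a genuinely different route from the paper. The paper works with the global difference $\Delta\Phi=\Phi(P)-2\Phi(\mathcal{C}_s)$, pulls the community-level normalization $(\epsilon(N-\epsilon))^{-1/2}$ out of the sums, compares it against $(T(N-T))^{-1/2}$ with $T=2\epsilon+1$, and reduces everything to the negativity of a single auxiliary expression $g(\epsilon)=-\epsilon(\epsilon-1)(N-\epsilon)(\sqrt{4/3}-1)+N$ for $\epsilon\ge 4$. You instead evaluate the exact $\phi$-value of every vertex class in both configurations and compare termwise; your closed forms check out (with $N=n-1$, an interior vertex of $\mathcal{C}_s$ contributes exactly $1$, a boundary vertex contributes $\sqrt{(\mathcal{I}-1)(n-\mathcal{I}-1)/(\mathcal{I}(n-\mathcal{I}))}$, and an interior vertex of $P$ contributes $\sqrt{(\mathcal{I}-1)(n-2\mathcal{I})/((2\mathcal{I}-1)(n-\mathcal{I}))}<1/\sqrt{2}$), and the quantitative endgame closes comfortably: the interior deficit is at least $2(\mathcal{I}-2)(1-1/\sqrt{2})\ge 6(1-1/\sqrt{2})\approx 1.76$, while the four non-interior vertices of $P$ contribute at most $4$ (crude bound $\phi\le 1$) against at least $4\cdot\sqrt{3}/2\approx 3.46$ from the four non-interior vertices of the split configuration (using $\mathcal{I}\ge 5$ and $n-\mathcal{I}\ge 16$), a swing of only about $0.54$. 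Your localization of the loss at the interior vertices is arguably more transparent about \emph{why} the merge is penalized than the paper's manipulation of normalization constants. One small correction: the bridge vertices $u,v$ do not in fact gain mass upon merging --- although their $\omega$ rises from $\mathcal{I}-1$ to $\mathcal{I}$, the simultaneous jump of $\epsilon$ to $2\mathcal{I}-1$ drives their $\phi$-value down to roughly $\sqrt{\mathcal{I}/(2\mathcal{I}-1)}<1$ --- so the obstacle you flag is even milder than you suggest, and the crude upper bound of $1$ per non-interior vertex of $P$ already suffices without any case analysis on $u,v,c,o$.
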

\begin{proof}
From $\mathcal{I}<\frac{n}{4}$, we can know that $\epsilon<\frac{n}{4}$. Let $\Delta\Phi = \Phi(P) - ( \Phi(\mathcal{C}_s)+\Phi(\mathcal{C}_t))$ and $T = 2\epsilon+1$, then $\Delta\Phi$ can be calculated as:
\begin{equation*}
    \begin{split}
	    \Delta\Phi &= {\big( T(N-T) \big) }^{ - \frac{1}{2}}\sum_{i\in P} \phi(i,P) - 2\Phi(\mathcal{C}_s)\\
			    &=  2\sum_{i\in \mathcal{C}_s} \frac{\omega_i N - \epsilon d_i }{ \sqrt{d_i(N-d_i)} } \Big( {\big( T(N-T) \big) }^{ - \frac{1}{2}}-{\big( \epsilon(N-\epsilon) \big) }^{ - \frac{1}{2}}  \Big)\\
&+2\frac{ N }{ \sqrt{d_u(N-d_u)} }{\big( T(N-T) \big) }^{ - \frac{1}{2}}.
\end{split}
\end{equation*}
Let $\mathcal{L} = \frac{ N }{ \sqrt{d_u(N-d_u)} }{\big( T(N-T) \big) }^{ - \frac{1}{2}}$ and $\mathcal{B} = \sum_{i\in \mathcal{C}_s}$ $ \frac{\omega_i N - \epsilon d_i }{ \sqrt{d_i(N-d_i)} }$, we have:
\begin{equation*}
    \begin{split}
	    \frac{1}{2}\Delta\Phi &= \mathcal{B}( \frac{ \sqrt{\epsilon(N-\epsilon)} - \sqrt{T(N-T)} }{\sqrt{T(N-T)\epsilon(N-\epsilon)} } )+\mathcal{L}.
\end{split}
\end{equation*}
Since $T=2\epsilon+1 \leq \frac{n}{2}$, then $\sqrt{2\epsilon(N-2\epsilon)}<\sqrt{T(N-T)}$. We can obtain:
\begin{equation*}
    \begin{split}
	    \frac{1}{2}\Delta\Phi &< -\mathcal{B}( \frac{  \sqrt{2\epsilon(N-2\epsilon)}-\sqrt{\epsilon(N-\epsilon)}}{\sqrt{T(N-T)\epsilon(N-\epsilon)} } )+\mathcal{L}.
\end{split}
\end{equation*}
Now we need to analyze $\mathcal{B}$. Firstly, $\mathcal{B}$ can be calculated as:
\begin{align*}
  \mathcal{B} &= \sum_{i\in \mathcal{C}_s} \frac{\omega_i N - \epsilon d_i}{\sqrt{ d_i(N-d_i) }} \xlongequal{\omega_i=\epsilon}
       \sum_{i\in \mathcal{C}_s} \frac{\epsilon(N -  d_i)}{\sqrt{ d_i(N-d_i) }}\\
    &= \sum_{i\in \mathcal{C}_s} \epsilon \sqrt{\frac{ N-d_i}{d_i}}.
\end{align*}
We have $d_i = \omega_i = \epsilon$ for $i \neq u$ and $i \neq c$ and $d_c=d_u = \epsilon+1$, then:
\begin{align*}
\mathcal{B} =& (\epsilon-1) \sqrt{\epsilon(N-\epsilon)}+ 2\epsilon \sqrt{\frac{N-(\epsilon+1)}{\epsilon+1}}\\
		  >& (\epsilon-1)\sqrt{\epsilon(N-\epsilon)}.
\end{align*}
Next, we will analyze $\mathcal{L}$ as well. Since $\epsilon<d_u<\frac{n}{2}$, we have:
$$\mathcal{L}<\frac{ N }{ \sqrt{\epsilon(N-\epsilon)} }{\big( T(N-T) \big) }^{ - \frac{1}{2}}.$$
Thus, we have a upper bound of $\frac{1}{2}\Delta\Phi$ as:
\begin{equation*}
    \begin{split}
	    \frac{1}{2}\Delta\Phi <& -(\epsilon-1) \sqrt{\epsilon(N-\epsilon)}( \frac{\sqrt{2\epsilon(N-2\epsilon)}
               -\sqrt{\epsilon(N-\epsilon)}}{\sqrt{T(N-T)\epsilon(N-\epsilon)} } )\\
               &+ \frac{N}{\sqrt{\epsilon(N-\epsilon)T(N-T)}} \\
             =&  \frac{ -(\epsilon-1) \sqrt{\epsilon(N-\epsilon)} \big( \sqrt{2\epsilon(N-2\epsilon)}-\sqrt{\epsilon(N-\epsilon)} \big) +N}{\sqrt{\epsilon(N-\epsilon)T(N-T)}}\\
		  =& \frac{ -\epsilon(\epsilon-1)(N-\epsilon)\big(\sqrt{2(2-\frac{N}{N-\epsilon})}-1 \big) +N}{\sqrt{\epsilon(N-\epsilon)T(N-T)}}\\
		  <& \frac{ -\epsilon(\epsilon-1)(N-\epsilon)\big( \sqrt{\frac{4}{3}}-1 \big) +N}{\sqrt{\epsilon(N-\epsilon)T(N-T)}}.
\end{split}
\end{equation*}
Let $g(\epsilon) = -\epsilon(\epsilon-1)(N-\epsilon)\big( \sqrt{\frac{4}{3}}-1 \big) +N$. Since $4 \leq \epsilon<\frac{n}{2}$, $g(\epsilon)$ decreases as $\epsilon$ increases. Then, we have:
$$ g(\epsilon) \leq g(4) < -1.8(N-4)+N<0.$$
Thus, $\Delta\Phi<0$. We finish the proof.
\end{proof}
Theorem 6 reveals the fact that $\Phi(S)$ can avoid merging  two very pronounced communities.  The reason is that $\Phi(S)$ takes into account the local structural information of every vertex. Now we will provide a comprehensible interpretation on this point. Firstly, it can be found that all the $\phi(u,S)$ values of vertices from $\mathcal{C}_s$ are 1 or approximately 1. The value 1 is the maximum value that a vertex can achieve, which indicates that a node completely belongs to a community in the sense that this node connects every member of the community and has no external links. Obviously, the $\phi(u,S)$ value will be reduced after merging $\mathcal{C}_s$ and $\mathcal{C}_t$ if the number of edges across $\mathcal{C}_s$ and $\mathcal{C}_t$ are too few. If we want to retain the correlation value 1 for each vertex, the vertices from $\mathcal{C}_s$ should connect all vertices from $\mathcal{C}_t$. However, it is a far cry from the situation shown in Fig. 6.  We can observe that the number of inter-edges across two communities which allows two communities to merge is affected by the $\Phi(S)$ values in two communities. When $\Phi(S)$ values of two communities are both higher, it will require more inter-edges.

One of the main causes of the resolution limit problem is that the modularity only considers the whole community rather than every vertex. Besides, it only depends on the number of edges $m$ when the network is sufficiently large. Let $Q$ be the modularity sum of two communities before the mergence and $\acute{Q}$ be the modularity value after the union. Then, we have:
$$\Delta Q =\acute{Q}-Q = \frac{\Delta l}{2m}-\frac{K_1K_2}{2m^2},$$
where $K_1$ is the degree sum of community 1, $K_2$ is the degree sum of community 2 and $\Delta l$ is the difference between  the number of intra-edges after merging two communities and that before the combination. If $\Delta Q>0$, it requires $\Delta l>\frac{K_1K_2}{2m}$. As we can see, when $m$ increases, the number of inter-edges needed will be reduced. As $m$ tends to infinity, any two communities will be merged even if there is only one edge  between them. The reason why this happens is that $\Delta l$ only embodies the overall structural information of a community and the modularity misses local structural information. To avoid the resolution limit problem as much as possible, first of all, the community metric should consider every vertex rather than view the community as a single unit. Second, the community metric should take full advantage of the local structural information. Then, we have the following Theorem.

\begin{theorem}
Given a graph $G(n,m)$ with a vertex $u$ and a community $S$. To simplify the notations, let $\boldsymbol{\Psi}=\Psi_{S\backslash \{ u \}}$ and $\boldsymbol{\psi}=\psi_u$, where $\boldsymbol{\Psi}$ is the community vector of S and $\boldsymbol{\psi}$ is the neighbor vector of $u$. For fixed $\epsilon>0$, $d_u>0$ and $\omega>0$, we have:
$$\lim_{N\to+\infty}\phi(u,S) = \frac{\boldsymbol{\Psi} \cdot \boldsymbol{\psi}}{ \lVert \boldsymbol{\Psi} \rVert_{2} \lVert \boldsymbol{\psi}\rVert_{2}},$$
where $\lVert . \rVert_{2}$ is Euclidean norm.
\end{theorem}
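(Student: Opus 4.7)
The plan is to compute both sides explicitly and verify that they agree in the limit. The key observation is that when $\boldsymbol{\Psi}$ and $\boldsymbol{\psi}$ are binary vectors, the cosine similarity collapses to a particularly simple expression in terms of the three contingency counts $\omega$, $\epsilon$, $d_u$, and this expression is exactly what the closed form (8) tends to once the $O(1/N)$ correction terms are discarded.

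First I would evaluate the right-hand side directly. By the definitions of the community vector and the neighbor vector, $\boldsymbol{\Psi}\cdot\boldsymbol{\psi}$ counts the positions at which both vectors carry a $1$, which is precisely $f_{11}=\omega$. Likewise $\lVert\boldsymbol{\Psi}\rVert_2^{\,2}$ counts the $1$'s in $\boldsymbol{\Psi}$ and equals $|S|-1=\epsilon$, while $\lVert\boldsymbol{\psi}\rVert_2^{\,2}=d_u$. Hence the right-hand side simplifies to $\omega/\sqrt{\epsilon d_u}$, independently of $N$.

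Next I would take the limit of the left-hand side starting from (8). Dividing the numerator $\omega N-\epsilon d_u$ by $N$ gives $\omega - \epsilon d_u/N$, which tends to $\omega$ because $\epsilon d_u$ is fixed. For the denominator I would rewrite
$$\sqrt{\epsilon(N-\epsilon)\,d_u(N-d_u)} \;=\; \sqrt{\epsilon d_u}\cdot N\sqrt{\bigl(1-\tfrac{\epsilon}{N}\bigr)\bigl(1-\tfrac{d_u}{N}\bigr)},$$
so that after dividing by $N$ it tends to $\sqrt{\epsilon d_u}$. The quotient therefore converges to $\omega/\sqrt{\epsilon d_u}$, matching the expression derived for the cosine similarity, and the two characterizations coincide in the limit.

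There is no essential obstacle: the claim is a routine limit once one identifies the $\phi$-coefficient of two binary vectors with their cosine similarity in the regime $N\gg\max(\epsilon,d_u)$. The only care needed is to apply the product and quotient limit laws in the right order, which is justified because the denominator stays strictly positive and bounded away from zero for all $N>\max(\epsilon,d_u)$, so that both the ratio in (8) and the separate limits of its factored numerator and denominator are well defined along the entire limiting sequence.
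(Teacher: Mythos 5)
Your proof is correct and follows essentially the same route as the paper: identify the cosine similarity of the two binary vectors with $\omega/\sqrt{\epsilon d_u}$ via $\boldsymbol{\Psi}\cdot\boldsymbol{\psi}=\omega$, $\lVert\boldsymbol{\Psi}\rVert_2^2=\epsilon$, $\lVert\boldsymbol{\psi}\rVert_2^2=d_u$, and then divide the numerator and denominator of (8) by $N$ to obtain the same quantity in the limit. The only difference is cosmetic (you compute the right-hand side first, the paper computes the limit first), and your added remark justifying the quotient limit law is a harmless strengthening.
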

\begin{proof}
We directly calculate the limit of $\phi(u,S)$ as $N$ tends to infinity.
\begin{equation*}
    \begin{split}
	    \lim_{N\to+\infty}\phi(u,S) =&\lim_{N\to+\infty}\frac{\omega N - \epsilon d_u }{ \sqrt{\epsilon(N-\epsilon) d_u(N-d_u)} }\\
							=&\lim_{N\to+\infty}\frac{\omega - \frac{\epsilon d_u}{N} }{ \sqrt{\epsilon(1-\frac{\epsilon}{N}) d_u(1-\frac{d_u}{N})} }\\
							=&\frac{\omega}{ \sqrt{\epsilon d_u} } = \frac{\sum_{i=1}^{N} \boldsymbol{\Psi}_i \boldsymbol{\psi}_i}{ \sqrt{\sum_{i=1}^{N} \boldsymbol{\Psi}_{i}^2} \sqrt{\sum_{i=1}^{N} \boldsymbol{\psi}_{i}^2 }}\\
							=&\frac{\boldsymbol{\Psi} \cdot \boldsymbol{\psi}}{ \lVert \boldsymbol{\Psi} \rVert_{2} \lVert \boldsymbol{\psi}\rVert_{2}}.
    \end{split}
\end{equation*}
\end{proof}
Theorem 7 indicates that $\phi(u,S)$ is the Cosine similarity between $\boldsymbol{\Psi}$ and $\boldsymbol{\psi}$  as $N$ tends to infinity. Then, we can recalculate $\Phi(S)$ in the limit:
$$\Phi(S) = Cos(S) = \sum_{u\in S}\frac{\omega_u}{ \sqrt{\epsilon d_u} }.    \eqno{(10)}$$
Obviously, it considers the Cosine similarity values between each node and  the community. As a result, it is unlikely to perform the community mergence operation  in the case of weak inter-connections between two communities. Theorem 6 and Theorem 7 both indicate $\Phi(S)$ can mitigate the resolution limit problem.

\subsection{Summary}
As a short summary, we would like to present the following remarks.  First of all, the use of different correlation functions in our node-centric framework
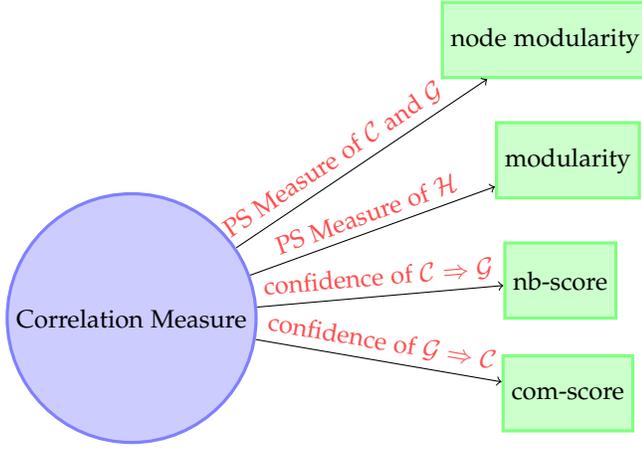
\begin{figure}[!ht]
\centering
\begin{tikzpicture}
    [L1Node/.style={circle,   draw=blue!50, fill=blue!20, very thick, minimum size=10mm},
    L2Node/.style={rectangle,draw=green!50,fill=green!20,very thick, minimum size=10mm}]
       \node[L1Node] (n1) at (0, 0){Correlation Measure};
       \node[L2Node] (n2) at (5.5, 3.7){node modularity};
	 \node[L2Node] (n3) at (5.8, 2.1){modularity};
	 \node[L2Node] (n4) at (5.7, 0.5){nb-score};
	 \node[L2Node] (n5) at (5.8, -1.0){com-score};
	\draw [->] (n1)--node [color=red!70,pos=0.43,above,sloped]{PS Measure of $\mathcal{C}$ and $\mathcal{G}$} (n2);
	\draw [->] (n1)--node [color=red!70,pos=0.5,above,sloped]{PS Measure of  $\mathcal{H}$} (n3);
	\draw [->] (n1)--node [color=red!70,pos=0.5,above,sloped]{confidence of $\mathcal{C} \Rightarrow \mathcal{G}$ } (n4);
	\draw [->] (n1)--node [color=red!70,pos=0.5,above,sloped]{confidence of $\mathcal{G}\Rightarrow \mathcal{C}$ } (n5);
    \end{tikzpicture}
 \caption{ The use of different correlation functions may lead to several different known community evaluation measures.}
	  \label{fig5}
\end{figure}
may lead to some different but known community evaluation measures.  As shown,  modularity, node modularity, neighborhood connectedness and community connectedness in Focs \cite{bandyopadhyay2015focs} are all concrete examples in our abstract framework. More importantly, we may explore more correlation functions to obtain more effective community evaluation measures in the future.

\section{Correlation-Based Community Detection}
In this section, we propose a  Correlation-Based Community Detection (CBCD) algorithm, which is based on PS measure and $\phi$-Coefficient.  CBCD takes a graph $G(V,E)$ as input and generates a partition of $G$ which is the set of detected communities. The algorithm is divided into three phases: Seed Selection, Local Optimization Iteration and Community Merging.
\subsection{Seed Selection}
The goal of Seed Selection is to initialize a set of seed communities for our next phase. Our algorithm adopts a local search strategy to optimize the objective function defined in (7), which requires seed nodes to start with. Triadic closure is an important property of social network, which describes the basic process of social network formation~\cite{granovetter1977strength}~\cite{rapoport1953spread}~\cite{kossinets2006empirical}. Triangle (clique of size 3) in the network embodies this property. A node contained by a large number of triangles has higher probability of being the core of a community. Thus, we will use the number of triangles as the criterion to select the seed nodes. The specific process is described in Algorithm 1. We first count the number of triangles of every node in the graph and then sort the nodes by their triangle numbers decreasingly. For those nodes with the same number of triangles, we compare their degrees. Then, we go through all the nodes in this order. For every node $u$ that has not been previously visited, we mark $u$ and its neighbor nodes as visited and then put $u$ into the seed node set $P$.
\begin{algorithm}[htb]
  \caption{Seed Selection.}
  \label{alg:Framwork}
  \begin{algorithmic}[1]
    \Require
      A graph $G(V,E)$.
    \Ensure
      The set of seed nodes $P$.
	
    \State Initial $P = \emptyset$.
    \State Triangle-Counting($G$).
    \State Sort $V$ by the number of triangles decreasingly.
    \For{each $u \in V$}
      \If {$not$  $visited(u)$}
		\State Mark $u$ visited.
		\For{each $v \in neighbors(u)$}
			\State Mark $v$ visited.
		\EndFor
		\State $P = P \bigcup \{u\} $.
	 \EndIf
    \EndFor

   \\
   \Return $P$;
  \end{algorithmic}
\end{algorithm}

Triangle counting is an important task in data mining and network science~\cite{al2018triangle}, which has been widely used in many applications such as community detection and link prediction. There are many exact triangle counting algorithms in the literature~\cite{itai1978finding}~\cite{schank2005finding}~\cite{alon1997finding}~\cite{latapy2008main}. Here we modify the Ayz-Node-Counting algorithm~\cite{alon1997finding} to implement the Triangle-Counting function in Algorithm 1. Ayz-Node-Counting divides the nodes into two parts with a degree threshold $\beta$: one set of nodes whose degrees are at most $\beta$ and another set of nodes whose degrees are at least $\beta$.  Ayz-Node-Counting enumerates node-pairs that are adjacent to each node from the low degree node set. For the subgraph $\mathcal{G}$ induced from the high degree node set, Ayz-Node-Counting uses the fast matrix product to compute the number of triangles for each node in $\mathcal{G}$. By choosing appropriate $\beta$,  the worst time complexity of Ayz-Node-Counting is  $O(m^{1.4})$. Since matrix multiplication-based methods may require large memory due to adjacency matrix storage, we also enumerate over node-pairs in the induced subgraph $\mathcal{G}$. Since the total degree is $2m$, then the number of high degree nodes is at most $\frac{2m}{\beta}$. The worst time complexity of Algorithm 2 is $O(\sum_{u \in \mathcal{U}}deg(u)^2) + O( (\frac{2m}{\beta})^3)$. Since we have:
$$\sum_{u \in \mathcal{U}}deg(u)^2 \leq \sum_{u \in \mathcal{U}}deg(u) \beta \leq \beta \sum_{u \in V}deg(u) = 2m\beta,$$
 the time complexity can be written as $O(2m\beta) + O( (\frac{2m}{\beta})^3)$. Let $\beta = \sqrt{2m}$, we have time complexity $O(m\sqrt{m})$. In fact, in large sparse network, the number of high degree nodes is small even for low $\beta$. In this case, the time complexity of Triangle Counting can be viewed as $O(\beta m)$.
\begin{algorithm}[htb]
  \caption{Triangle Counting.}
  \label{alg:Framwork}
  \begin{algorithmic}[1]
    \Require
      A graph $G(V,E)$.
    \Ensure
      A array $TC$ such that $TC[u]$ is the number of triangles containing $u$.
	
    \State Initial $TC[u]=0$ for all $u$, $\mathcal{U} = \emptyset$ and $\mathcal{V} = \emptyset$.
    \For{each $u \in V$}
    	\If {$deg(u)\leq\beta$}
	    \State Add $u$ to $\mathcal{U}$.
	\Else
	    \State Add $u$ to $\mathcal{V}$.
	\EndIf	
    \EndFor
	
	\For{each $u \in \mathcal{U}$}
    	    \For{each pair $(v,w)$ of neighbors of $u$}
    		  \If {\textit{$(v,w)$ exist an edge} \textbf{and} $v<w$}
	    		\If {$deg(v)<\beta$ \textbf{and} $deg(w)<\beta$ }
	   			 \If{$u<v$}
				    \State Increase $TC[u]$, $TC[v]$ and $TC[w]$ by 1.
				 \EndIf
			\ElsIf{$deg(v)<\beta$}
	   			 \If{$u<v$}
				    \State Increase $TC[u]$, $TC[v]$ and $TC[w]$ by 1.
				 \EndIf	
			\ElsIf{$deg(w)<\beta$}
	   			 \If{$u<w$}
				    \State Increase $TC[u]$, $TC[v]$ and $TC[w]$ by 1.
				 \EndIf
			\Else
				\State Increase $TC[u]$, $TC[v]$ and $TC[w]$ by 1.
			\EndIf
		  \EndIf	
    	    \EndFor
     \EndFor
	\State Induce a subgraph $\mathcal{G}(\mathcal{V},E')$ by $\mathcal{V}$.
	\For{each $u \in \mathcal{U}$}
    	    \For{each pair $(v,w)$ of neighbors of $u$ in $\mathcal{G}$}
    		  \If {\textit{$(v,w)$ exist an edge} \textbf{and} $v<w$}
	    		\If {$u<v$ }
	   			\State Increase $TC[u]$, $TC[v]$ and $TC[w]$ by 1.
			\EndIf
		  \EndIf	
    	    \EndFor
     \EndFor
   \\
   \Return $TC$;
  \end{algorithmic}
\end{algorithm}
\subsection{Local Optimization Iteration}
In this phase, we aim at finding a partition that maximizes the objective function defined in (7). Let $P$ be a partition of graph $G$ and $P(u)$ represents the index of the community of a node $u$. Since the local search strategy (Line 5 to 20) is adopted in this phase, we initially start with a set of seed nodes. Suppose we obtain a partition $P = \{ S_{1},...,S_{\mathcal{M}} \}$ with $|P| = \mathcal{M}$ from Seed Selection. Despite we call $P$ the ``\textit{partition}", every $S_j \in P$ has only one seed node and many other nodes are not assigned to any community. To deal with this initial state, for each node $u$ that is not assigned to any community, we let $P(u) = -1$. Then, we reformulate the objective function (7) to make it convenient for implementing Local Optimization Iteration:
\begin{align*}
		&\argmax_{P}\Gamma (P) \\
		&= \argmax_{P} \sum_{ u \in V}\sum_{j=1}^{\mathcal{M}} \delta \big( P(u),j \big)\Big( \frac{l_{u,S_j}}{N}- \frac{\epsilon_{S_j} d_u}{N^2} \Big),  \tag{$10$}
\end{align*}
where $l_{u,S_j}$ is the number of edges between $u$ and $S_j$, $\delta(x,y)$ is the kronecker delta function whose value is 1 if $x=y$ and 0 otherwise.
Our optimization algorithm is an iterative process. In each iteration, it will reorganize the partition to improve the value of (9) until a locally optimal solution is achieved. First, we should consider how to assign those nodes that are not contained in any community. According to (9), for any partition $P$, we have:
\begin{align*}
      \Gamma (P) &= \sum_{ u \in V}\sum_{j=1}^{\mathcal{M}} \delta \big( P(u),j \big)\Big( \frac{l_{u,S_j}}{N}- \frac{\epsilon_{S_j} d_u}{N^2} \Big)  \\
	   &\leq \sum_{ u \in V}\max_{S \in AC(u) }\Big( \frac{l_{u,S}}{N}- \frac{\epsilon_{S} d_u}{N^2} \Big), \tag{$11$}
\end{align*}
where $AC(u)$ is the set of communities whose nodes are adjacent to node $u$. Obviously, (10) is more easily to find the maximum value than (9) if we do not modify the current partition and the PS value of each node is computed independently from others. Thus, we will find a community $S$ for each unassigned node $u$ by maximizing $PS(u,S)$ (Line 7 to 13). Then, we put node $u$ into community $S$ and update $\Gamma(P)$ with the difference brought by this operation (Line 14 to 18):
\begin{align*}
   \Delta F &= F(S') - F(S) = \frac{2l_{S'}}{N}-\frac{\epsilon_{S'} K_{S'}}{N^2} - \big(  \frac{2l_S}{N}-\frac{\epsilon_{S} K_S}{N^2}  \big)\\
            &= \frac{2(l_{S'}-l_S)}{N} - \frac{\epsilon_{S'} K_{S'}-\epsilon_{S} K_S}{N^2}\\
		 &= \frac{2l_{u,S}}{N}- \frac{ (\epsilon_S+1)(K_S+d_u)-\epsilon_S K_S}{N^2}\\
		 &=\frac{2l_{u,S}}{N}-\frac{(\epsilon_S+1)d_u+K_S}{N^2}.
\end{align*}
In practice, for each node, the partition is modified after performing the steps described in Line 7 to 18, which is to make the algorithm more robust to local maxima. Next, we consider the nodes that have already been assigned to a community. Line 21 to 39 in Algorithm 3 describes the partition refinement step. It refines the partition obtained in local search step (Line 5 to 20) using a hill climbing method. In each iteration, we perform the movements of nodes between communities to improve the value of $\Gamma(P)$. For each node $u$ belonging to a community $S$, we compute the difference brought by removing $u$ from $S$, $\Delta F = F(S) - F(S')$ (Line 23 to Line 25). For a community $S_j$, we compute the difference brought by adding $u$ to $S_j$,  $\Delta F_j = F(S_j') - F(S_j)$. We choose a community $S_i$ such that $\Delta F_i > \Delta F$ and $\Delta F_i - \Delta F$ are maximized. Then,  we add $u$ to $S_i$ and remove $u$ from $S$ to update the partition. The local search step and the partition refinement step are performed alternately until every node has been assigned to a community and objective function $\Gamma(P)$ converges.
\begin{algorithm}[htb]
  \caption{Local Optimization Iteration.}
  \label{alg:Framwork}
  \begin{algorithmic}[1]
    \Require
      A graph $G(V,E)$ and a seed set $P$.
    \Ensure
      A partition of graph $G$.
    \State Initial $P_0 = P$, $k = 0$ and $\Gamma(P_k) = 0$.
    \Repeat
	
	\State $k++$.  	
	\State $P_{k} = P_{k-1}$.
       \For{each $u \in V$}
 	 	 \If{ \textit{$u$ not in any community}}
			\State $i=-1$, $t = -\infty$.
     	 		\For{each $S \in neighbor\_community(u)$}
	 			\State Compute $ PS =\frac{l_{u,S}}{N}-\frac{\epsilon_S d_u}{N^2}$.
				\If{$PS>t$}
					\State$i=index(S)$, $t = PS$.
				\EndIf
			\EndFor
			\If{$i \ne -1$}
				\State $\Gamma(P_k)= \Gamma(P_k)+\frac{2l_{u,S}}{N}-\frac{(\epsilon_S+1)d_u+K_S}{N^2}$.
				\State Add $u$ to $S_i$.
				\State Update $P_k$.
			\EndIf
		\EndIf
	  \EndFor
	  \For{each $u \in V$}
 	 	 \If{ \textit{$u$ assigned to a community} }
			\State $i = -1$, $j =P_k(u)$.
			\State $t = 0$, $H = S_j\backslash \{u\}$.
			\State $M = \frac{2l_{u,H}}{N}-\frac{\epsilon_{H}d_u+K_{H}}{N^2}$.
     	 		\For{each $S \in neighbor\_community(u)$}
	 			\State Compute $\Delta \Gamma = \frac{2l_{u,S}}{N}-\frac{(\epsilon_S+1)d_u+K_S}{N^2}-M$.
				\If{$\Delta \Gamma>t$}
					\State$i=index(S)$, $t = \Delta \Gamma$.
				\EndIf
			\EndFor
			\If{$i \ne -1$}
				\State $\Gamma(P_k)= \Gamma(P_k)+t$.
				\State Add $u$ to $S_i$.
				\State Remove $u$ from $S_j$.
				\State Update $P_k$.
			\EndIf
		\EndIf
	  \EndFor
	\Until{$\Gamma(P_k)<\Gamma(P_{k-1})$}

   \\
   \Return $P_{k-1}$;
  \end{algorithmic}
\end{algorithm}
\subsection{Community Merging}
\begin{algorithm}[htb]
  \caption{Community Merging.}
  \label{alg:Framwork}
  \begin{algorithmic}[1]
    \Require
      A graph $G(V,E)$ , a threshold $Th$ and a partition $P$.
    \Ensure
      Detected communities.
    \State Initialize a max-heap $Max\_H = \emptyset$.
    \State Construct a community graph $\mathcal{F}(P,\Delta W)$ by $G$ and $P$.
    \For{each $S_u \in P$}
     	 \For{each $S_v \in neighbors(S_u)$}
	     \If{$\Delta W_{uv} > Th$}
			\State $Max\_H$. push( $\langle u,v,W_{uv}\rangle$ ).
		\EndIf
	 \EndFor
    \EndFor
    \While{ \textit{$Max\_H$ not empty} }
          \State $\langle u,v,{\Delta W_{uv}'}\rangle$  = $Max\_H$.top( ).
	     \State $Max\_H$.pop( ).
		\If{$S_u$ \textit{exist} \textbf{and} $S_v$ \textit{exist} \textbf{and} $\Delta W_{uv}' = \Delta W_{uv}$}
				\State Union $S_v$ and $S_u$.
				\For{each $S_w \in neighbors(S_v)$}
					\State Update $\Delta W_{uw}$ and $\Delta W_{wu}$.
					\If{$\Delta W_{uw} > Th$}
						\State $Max\_H$.push( $\langle u,w,W_{uw}\rangle$ ).
					\EndIf
				\EndFor
		\EndIf
    \EndWhile

				\State Update $P$.
   \\
   \Return $P$;
  \end{algorithmic}
\end{algorithm}
After Local Optimization Iteration, there may be many small but significant communities. We need a merging operation to find communities with suitable size. The theoretical basis of Community Merging comes from section 3.2.2. $\Phi(S)$ is the criterion used in Community Merging to judge whether two communities should be merged. Note that here $\Phi(S)$ is just used to implement the merging operation instead of being an objective function. In the whole process, we maintain a max-heap which contains a set of elements and supports delete or insert operation in $O(\log n)$ time. Community Merging is described in Algorithm 4. First, we start off with each community $S_i \in P$ being the sole node of a graph $\mathcal{F}$ and establish  an edge between $S_i$ and $S_j$ if at least one edge links them. $\mathcal{F}$ is a weighted graph, in which the weight between community $i$ and $j$ is $\Delta W_{ij} = \Phi(S_i \bigcup S_j) - \Phi(S_i)-\Phi(S_j)$. Then, for each pair $(i,j)$ that $\Delta W_{ij}>Th$, we put a triad $(i,j,\Delta W_{ij})$ into max-heap $Max\_H$. In each iteration, we take the triad $(i,j,\Delta W_{ij})$ from the top of $Max\_H$  whose $\Delta W_{ij}$ is the maximum and merge community $i$ and community $j$. For the union operation, we can use the disjoint-set data structure to implement it. Next, we update the graph  $\mathcal{F}$ and put the new triad $(i,k,\Delta W_{ik})$ whose $\Delta W_{ik}>Th$ into $Max\_H$. We continue these steps until $Max\_H$ is empty. The threshold $Th$ controls the size of communities we find. If $Th$ is too high, the detected communities may be too small. If $Th$ is too low, the detected communities may be too big so that even the resolution limit problem will happen. In practice, $\Phi(S)$ often requires a great number of inter-connections for merging two communities. Thus, we need to properly relax $Th$ to be an appropriate small negative value. Such a relaxation will not lead to the resolution limit problem, but can help us find communities with suitable size. To give a theoretical analysis in large network, recalling Theorem 7 introduced in section 3.2.2, the mathematical expression of $Cos(S)$ is given in (10). If we do not want to merge two equal-sized cliques $\mathcal{C}_1$ and $\mathcal{C}_2$ ($|\mathcal{C}_1|=|\mathcal{C}_2| \geq 5$ and there is only one edge between $\mathcal{C}_1$ and $\mathcal{C}_1$), we should have:
$$Th >Cos( \mathcal{C}_1 \bigcup \mathcal{C}_2)- Cos(\mathcal{C}_1)-Cos(\mathcal{C}_2) = -2.947214.$$
We suggest the user to choose an appropriate threshold $Th$ such that $-2.9<Th<0$.

\subsection{Complexity Analysis}
As it has been discussed, the time complexity of Seed Selection is $O(\beta m)$, where $\beta$ is a constant. As for Local Optimization Iteration, we need to go through all the edges in each iteration. Thus, its time complexity is $O(Max\_It \cdot m)$, where $Max\_It$ is the number of iterations that Local Optimization Iteration needs. In practice, we can set $Max\_It = 20$ and it is sufficient for Local Optimization Iteration to converge. For Community Merging, we need to merge two communities and insert new elements into a max-heap in each iteration. We use both path compression and union by size to ensure that the amortized time per union operation is only $O(\alpha(n))$~\cite{tarjan1984worst}~\cite{tarjan1979class}, where $\alpha(n)$ is the inverse Ackermann function which can be viewed as a constant. Thus, we have $O(n)$ for the union operation in general. For insert operation, the worst time complexity is $O(m\log n)$ since we have to go through every neighbor of the community in each iteration. Totally, the time complexity is $O(m+n+m\log n)$.

\section{Experimental Results}
In this section, the proposed algorithm CBCD is compared with the existing state-of-the-art algorithms on both synthetic networks and real networks. Each node in these two different kinds of networks has a ground truth community label. For communities found by the detection algorithms, we need to compare them with the ground truth communities. A criterion is necessary to measure the similarity between the final partition of the algorithm and the actual communities. Many evaluation measures, such as Normalized Mutual Information (NMI)~\cite{danon2005comparing}, ARI~\cite{hubert1985comparing} and Purity, have been proposed to evaluate the clustering quality of algorithms. NMI is the most widely-accepted and important evaluation measure, since it is more discriminatory and more sensitive to errors in the community detection procedure~\cite{danon2005comparing}. We will use NMI as our performance evaluation measure in the experiments. The code of NMI calculation offered by McDaid et al.~\cite{mcdaid2011normalized} can be found at \url{https://github.com/aaronmcdaid/Overlapping-NMI}.

According to the comparative analysis of community detection algorithms~\cite{lancichinetti2009community}, Louvain and Infomap are  two of the best classical algorithms in the literature. Thus, we will take these two algorithms as the competing algorithms. Besides, some novel algorithms proposed in recent years will also be compared with CBCD. These algorithms are listed as follow:
\begin{itemize}
\item Louvain~\cite{blondel2008fast} is a well-known heuristic algorithm based on modularity. The algorithm is composed of two steps which are performed iteratively. The first step is to move each node to the community that the gain of modularity is positive and maximum. The next step is to build a new weighted network whose nodes are communities found in first step. The procedure will continue until modularity achieves a maximum. Louvain can unfold a complete hierarchical community structure for the network. The program can be downloaded from \url{https://sites.google.com/site/findcommunities/}.
\item Infomap~\cite{rosvall2008maps} is another well-known algorithm based on information theory and random walk. It uses the probability flow of random walks taking place over a network as a description of the network structure and decomposes the network into modules using information theoretic result to compress the probability flow. It simplifies the organization of network and highlights their relationships. The program can be downloaded from \url{http://www.mapequation.org/code.html}.
\item FOCS~\cite{bandyopadhyay2015focs} is a fast overlapping community detection algorithm, which can detect overlapped communities using the local connectedness. FOCS takes a parameter OVL as the input, which is a threshold allowing for maximum overlapping between two communities. Since our comparison  is conducted over the non-overlapped algorithms, we set OVL to 0. The program can be downloaded from \url{https://github.com/garishach/focs}.
\item SCD~\cite{prat2014high}~\cite{prat2016put} is a community detection algorithm based on a new community metric WCC. WCC considers the triangle as the basic structure instead of the edge or node. The theoretic analysis given in~\cite{prat2016put} shows that WCC can correctly capture the community structure. The program can be downloaded from \url{https://github.com/DAMA-UPC/SCD}.
\item Attractor~\cite{shao2015community} is an algorithm based on distance dynamics. The fundamental basis of Attractor is to view the whole network as an adaptive dynamical system. Each node in this dynamical system interacts with its neighbors and distances among nodes will be changed  by the interactions. At the same time, distances will affect the interactions conversely. The dynamical system eventually evolves a steady system. The Attractor algorithm require a cohesion parameter $\lambda$ that ranges from 0 to 1, which is used to determine how exclusive neighbors affect distance (positive or negative influence). According to ~\cite{shao2015community}, we set $\lambda=0.5$. The program can be downloaded from  \url{https://github.com/YcheCourseProject/CommunityDetection}.
\end{itemize}

For all experiments, without further statement, we set the threshold parameter of  our algorithm $Th=-2.8$ when $0<|V|<4000$ and $Th=-0.43$ when $4000\leq |V| $, corresponding to small networks and large-scale networks.  All experimental results have been obtained on a workstation with 3.5 GHz Intel(R) Xeon(R) CPU E5-1620 v3 and 16.0 GB RAM. For Louvain, we adopted the lowest partition of the hierarchy, which is stored in the graph.tree file. For Infomap, the number of outer-most loops to run before picking the best solution is specified to be 10.

\begin{figure*}[t]
    \centering
     \centerline{ \includegraphics[width=1.0\linewidth]{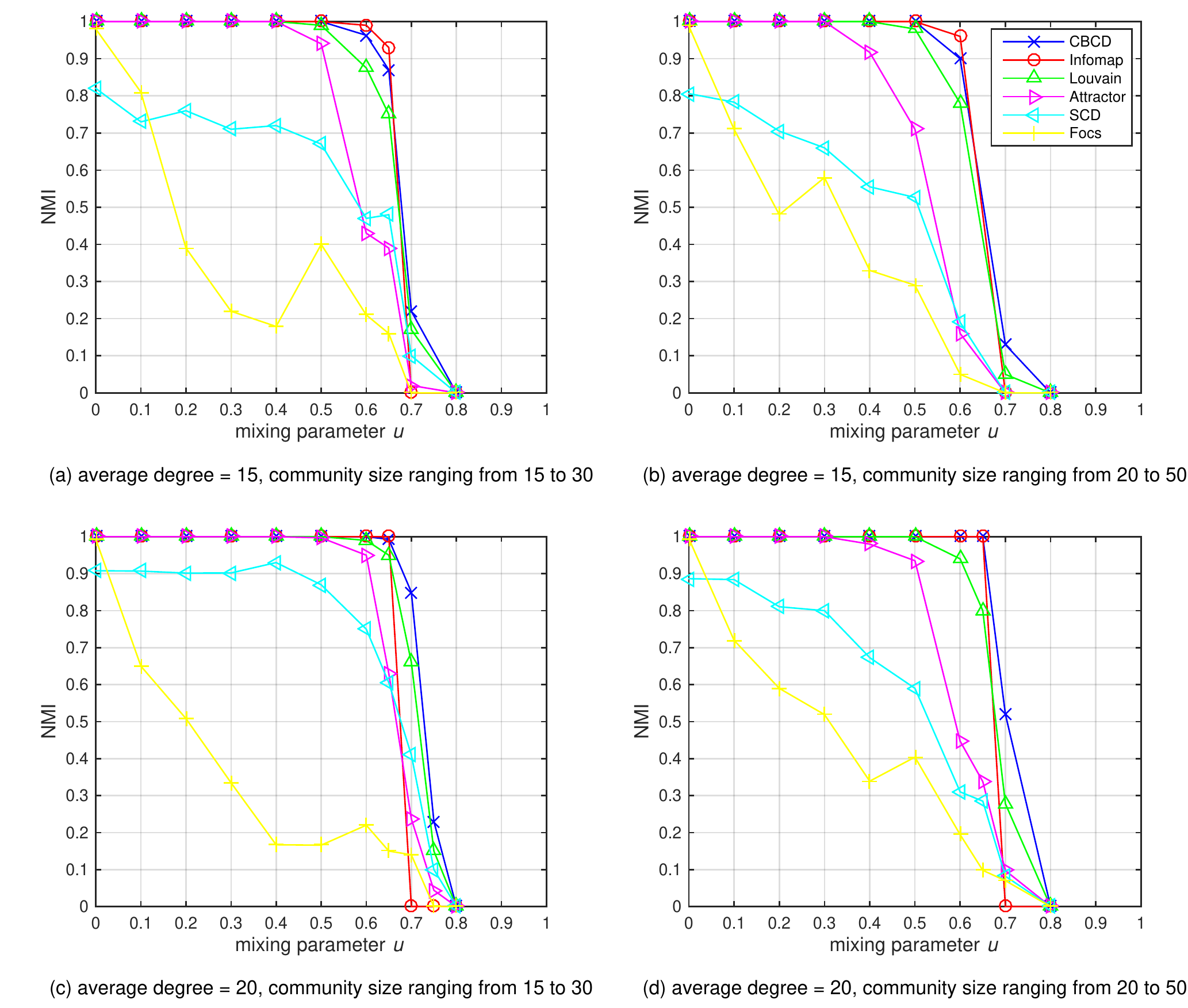}}
	 \caption{ The performance of  different algorithms on LFR benchmark. The panels indicate the NMI value of the detection algorithm as a function of the mixing parameter $u$.}
	  \label{fig6}
\end{figure*}
\begin{figure*}[!htb]
    \centering
     \centerline{ \includegraphics[width=1.0\linewidth]{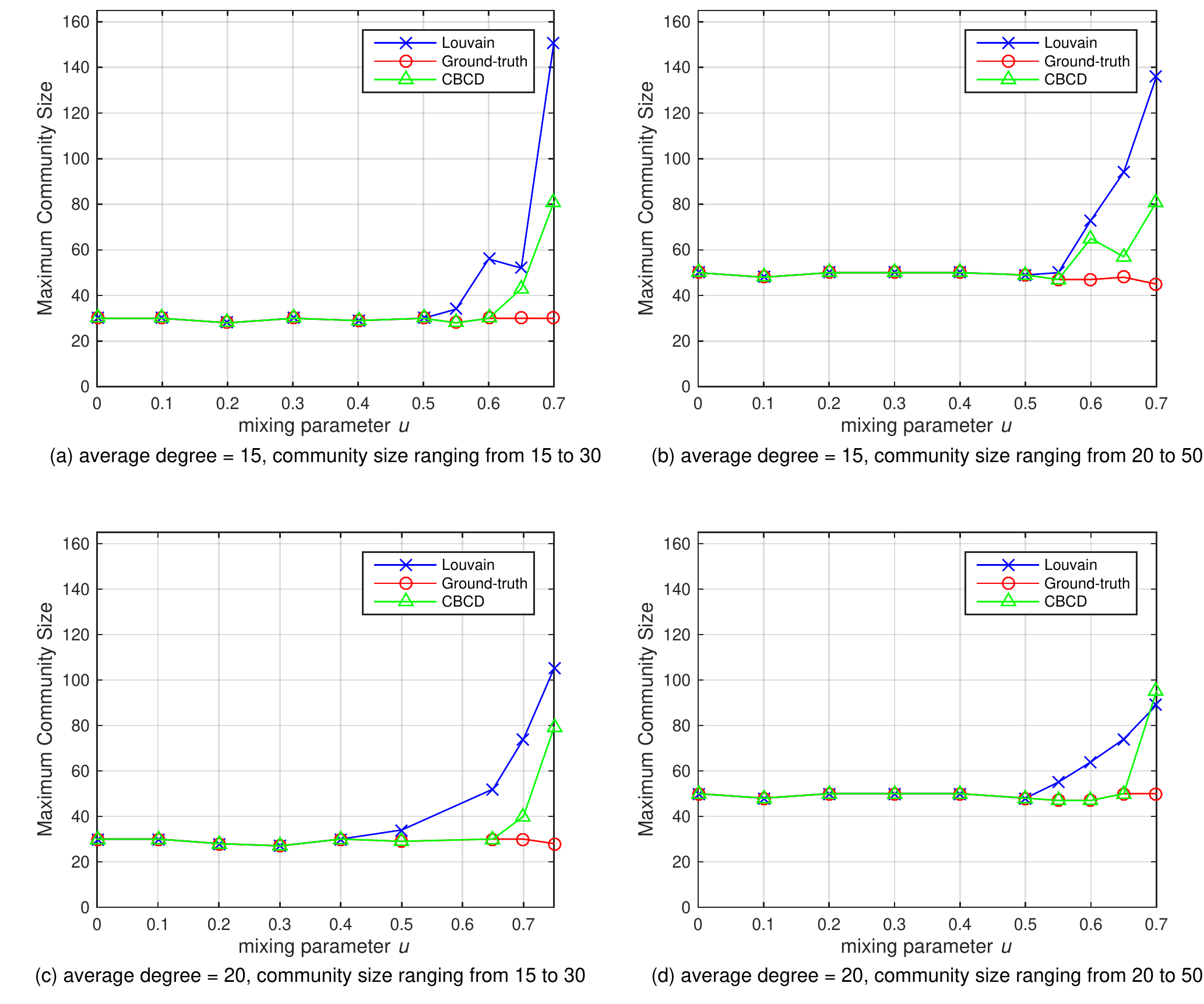}}
	 \caption{ The maximum community size in the partition generated when the mixing parameter $u$ is ranged from 0 to 0.7. The red curve corresponds to the ground-truth partition.}
	  \label{fig7}
\end{figure*}
\subsection{LFR Benchmark}
LFR Benchmark which is introduced by Lancichinetti et al.~\cite{lancichinetti2008benchmark} is a very popular graph simulation model. The most important parameter of LFR Benchmark is the mixing parameter $u$. The community structure of LFR network becomes more pronounced as $u$ decreases. In particular, $u=0$ indicates that each node only connects the nodes inside its community and $u=1$ indicates that each node only connects the nodes outside its community. The program of LFR Benchmark can be downloaded from \url{https://sites.google.com/site/santofortunato/inthepress2}.

We generate several LFR networks characterized by different features to compre the performance of various algorithms. The number of nodes in all LFR networks is fixed to 1000. We consider $2\times2$ cases, in which the community size parameter is specified within the range  [15,30] and [20,50] and the average degree is set to be 15 and  20. For each case, we fix the average degree and the community size, and then increase the mixing parameter $u$ from 0 to 1 to generate a variety of LFR networks for comparison.

The performance comparison result in terms of NMI is shown in Fig. 7. As we can see from Fig. 7, CBCD, Louvain and Infomap almost achieve the best clustering performance. The NMI values of all algorithms will decrease when the mixing parameter $u$ tends to 1. This is because the increment of mixing parameter $u$ will introduce more edges among different communities, making it difficult to identify the underlying true communities. CBCD have better performance when the community size parameter is smaller. Note that even when the community size parameter is relatively big, CBCD still has good  performance in comparison with the other algorithms. Except for Infomap and Focs, the performance of other algorithms become better when the average degree parameter is increased, and CBCD  is better than all other algorithms when the degree is 20. We can find that Infomap always starts to decrease dramatically when $u$ ranges between $0.6$ and $0.7$. It is because that Infomap is based on the random walk dynamics and is  more sensitive to the noisy inter-edges between communities as $u$ tends to 1. By contrast, CBCD is more robust to these noisy inter-edges despite the fact that Infomap is slightly better than CBCD when average degree is 15 and $u$ ranges between $0.55$ and $0.65$. Compared to  Louvain, CBCD is always the winner. Let us consider the maximum community size in the partition obtained by Louvain and CBCD. The relation between the mixing parameter $u$ and  the maximum community size of detected partition is plotted in Fig. 8. The maximum community size of both CBCD and Louvain increases as the mixing parameter $u$ is increased. This demonstrates that the detection algorithms tend to combine of two ground-truth communities when $u$ is high. The maximum community size of CBCD is almost always lower than that of Louvain. We can observe that, especially when the mixing parameter $u$ ranges between $0.5$ and $0.6$, the maximum community size of CBCD is significantly lower than the maximum community size of Louvain. This result show that CBCD can mitigate the resolution problem to some extent.

\subsection{Real-World Network}
\begin{table*}
\centering
\caption{The characteristics of real-world network data sets.}
\begin{tabular}{llllll} 
\toprule
  Data Sets  
   &    $|V|$ &  $|E|$ & $\langle d \rangle$ & $d_{max}$ & $|C|$ \\
  \midrule
  football&  $115$ & $613$  & $10.57$  & $12$  & $12$ \\
  karate &   $34$  & $78$   & $4.59$   & $17$  & $2$ \\
  personal&  $561$ & $8375$ & $29.91$  & $166$ & $8$ \\
  polbooks&  $105$ & $441$  & $8.4$    & $25$  & $3$ \\
  polblogs&  $1490$& $19090$& $27.32$  & $351$ & $2$  \\
  \bottomrule
  \end{tabular}
  \label{tbl:table3}
\end{table*}
\begin{table*}
\centering
\caption{The performance of different  algorithms on the real-world network.}
\begin{tabular}{l|lllll|lllll} 
\toprule

      & \multicolumn{5}{|c|}{NMI}  & \multicolumn{5}{c}{NC}  \\
  \midrule
 &football  &  karate & personal & polbooks &  polblogs&football  &  karate & personal & polbooks &  polblogs\\

 CBCD&  $0.734$ & \textbf{0.837}  & \textbf{0.3639}  & \textbf{0.330}  & \textbf{0.391} & $9$  & $2$   & $11$  & $4$  & $6$\\
  Infomap &   $0.833$  & $0.563$   & $0.248$   & $0.293$  & $0.268$ & $12$ & $3$ & $6$ & $5$ & $303$ \\
  Louvain& $0.838$ & $0.259$ & $0.08$  & $0.158$ & $0.212$& $12$ & $7$ & $17$ & $10$ & $32$ \\
  SCD&  \textbf{0.840} & $0.395$& $0.179$  & $0.116$ & $0.085$ & 14& 8 & 125 & 23 & 664 \\
  Focs&  $0.392$& $0.189$& $0.171$  & $0.166$ & $0.106$ & 5  & 1  & 13 & 9 & 19\\
  Attractor& $0.833$& $0.04$& $0.299$  & $0.315$ & $0.124$ &12 &1  &56  &7  &313\\
\bottomrule
  \end{tabular}
  \label{tbl:table5}
\end{table*}
In most real-world networks, each node has no ground-truth label.  The modularity is typically used to evaluate the quality of detected communities. However, as we have discussed before, modularity is not a good quality measure of communities because of the resolution limit problem.  Besides, modularity is found out owning the tendency of following the same general pattern for different classes of  networks \cite{leskovec2010empirical}. Thus, we only conduct our experiment on several well-known real-world networks with ground-truth communities: Karate (karate) \cite{zachary1977information}, Football (football) \cite{girvan2002community}, Personal Facebook (personal) \cite{wilson2014testing}, Political blogs (polblogs) \cite{adamic2005political}, Books about US politics (polbooks) \cite{krebs2013social}. The detailed statistics of real-world networks are given in Table 3, where $|V|$ denotes the number of the nodes, $|E|$ denotes the number of the edges, $d_{max}$ denotes the maximal degree of the nodes, $\langle d \rangle$ denotes the average degree of the nodes and $|C|$ denotes the number of ground-truth communities in the network. The performance of the algorithms on the real-world network is shown in Table 4, where NC is the number of  communities detected by the algorithm.

\textbf{American college football}:  American college football network describes football games between Division IA colleges during the regular season in Fall 2000. It has 115 teams and 631 games between these teams. For each node (team), there is an edge connecting two nodes if two teams played a game. The teams were partitioned into 12 conferences (communities). Louvain, Infomap, SCD and Attractor all have good performance on the football data set, and SCD achieve the best performance among these algorithms. We have to admit the fact that these algorithms outperform our method on the football data set. Fig. 9 plots the variation of NMI of the partition detected by CBCD when the threshold $Th$ used in the merging operation increases from -2.8 to 0. As  $Th$ increases from -2.8 to -2.2, NMI increases to 0.773 which is the maximal NMI value of CBCD on the football data set. NMI begins to decrease when  $Th$ further increases.  Note that the number of detected communities increases with the increment of $Th$. The number of detected communities is 10 when $Th=-2.2$. The quality of algorithm 4 (Community Merging) is mainly determined by the output of algorithm 3 (Local Optimization Iteration). Although CBCD can achieve a NMI value of 0.773, it still cannot beat other algorithms except Focs. This demonstrates that  there is still room for improving  algorithm 3.
\begin{figure}[!ht]
    \centering
       \includegraphics[width=1\linewidth]{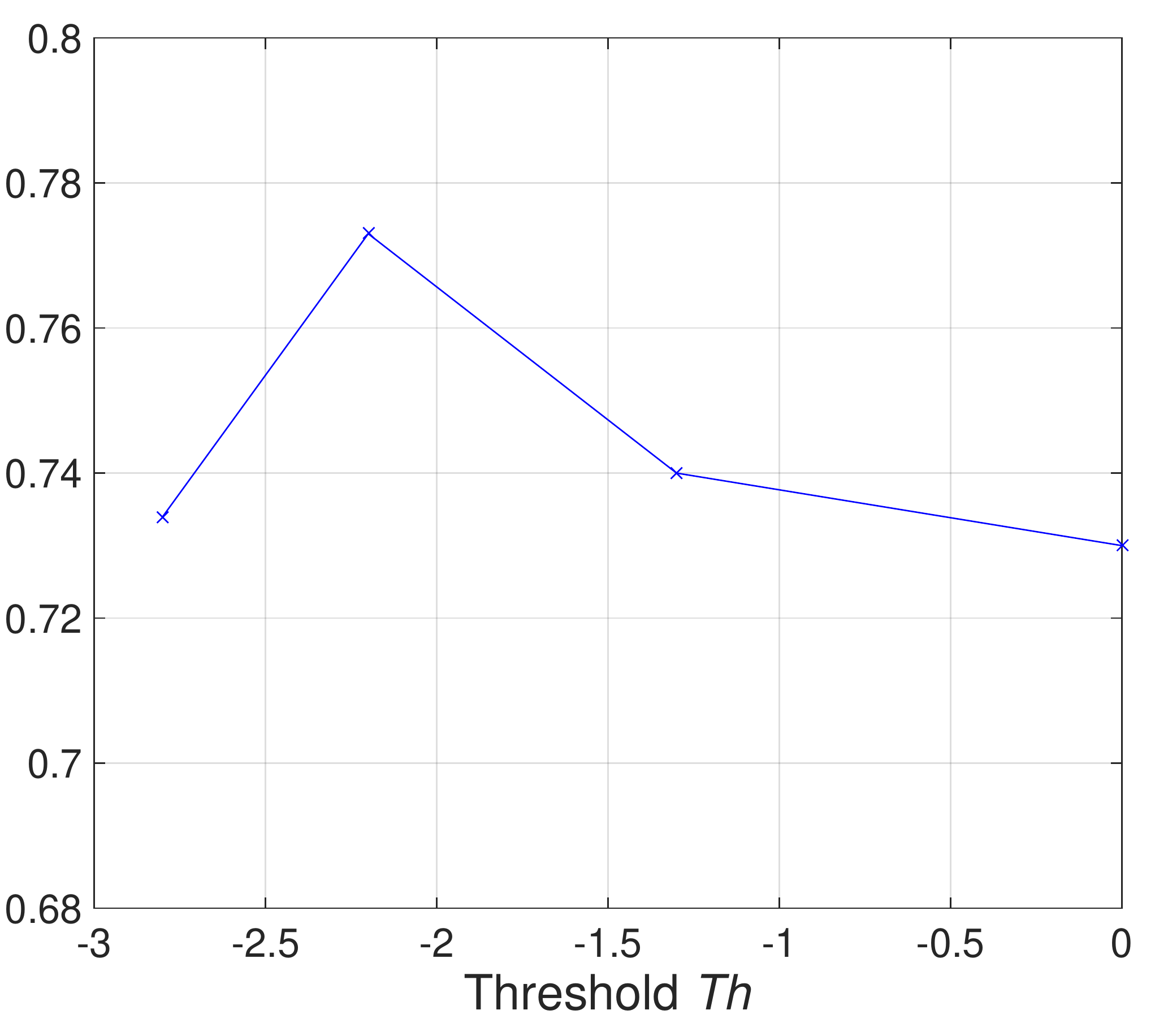}
	\centering	
 	\caption{ The variation of NMI of the partition detected by CBCD when the threshold $Th$ is increased from -2.8 to 0 on the football network. }
	  \label{fig8}
\end{figure}

\textbf{Zachary's karate club network}: Karate is a famous network derived  from the Zachary's observation about a karate club. The network describes  the friendship among the members of a karate club. The network was divided into two parts because of the divergence between administrator and instructor. According to Table 4, CBCD outperforms all other algorithms and achieves a NMI value of 0.840. Two communities are successfully found by CBCD, but our algorithm classifies one node `10' into the wrong community. We observe that this node only have two edges and each edge connects one of two communities respectively. It is difficult to decide which community it really belongs to. Thus, we consider this node as a noisy node. In fact, it can make sense to assign node `10' to both communities in the context of overlapping community detection. However, this study mainly focuses on  non-overlapping community detection. If we delete node '10' from the karate, our CBCD algorithm can achieve the perfect performance of NMI=1. Its output exactly matches the partition of  ground-truth  communities. Infomap achieves the second best performance. For Attractor, the worst performance is due to that it puts  all the members into one community.

\textbf{Personal Facebook network}: Personal is the network which gives the friendship structure of the first author, where each individual (node) is labeled according to the time period when he or she met the first author. Persons are divided into the different groups according to their locations. CBCD achieves the best performance with relatively high quality (NMI=0.3639) on the personal data set. For Attractor and Infomap, they also achieve good performance. Louvain achieves the worst result.

\textbf{Books about US politics}: This network consists of 105 nodes and 441 edges, which is derived from the politic books about US politics published in 2004 when presidential election takes place. Each node represents the book sold at \textit{Amazon.com}, and each edge represents that two books are frequently  co-purchased by the same buyer.  Each book is labeled with "liberal", "neutral" or "conservative", that is given by \textit{Amazon.com}. CBCD gives the best partition with NMI = 0.33 among the comparing algorithms. For Attractor and Infomap yield good performance while Lovain, SCD and Focs have relatively bad performance.

\textbf{U.S. political blog}: The polblog network consists of 1490 nodes and 19090 edges, which describes the degree of interaction between liberal and conservative blogs. Compared to other algorithms, CBCD has the best performance on the polblog data set. Infomap and Louvain also produce  good partitions. Attractor, Focs, and SCD yield relatively bad partitions.
\subsection{Large-Scale Real-World  Network}
\begin{figure*}[!t]
    \centering
     \centerline{ \includegraphics[width=1.0\linewidth]{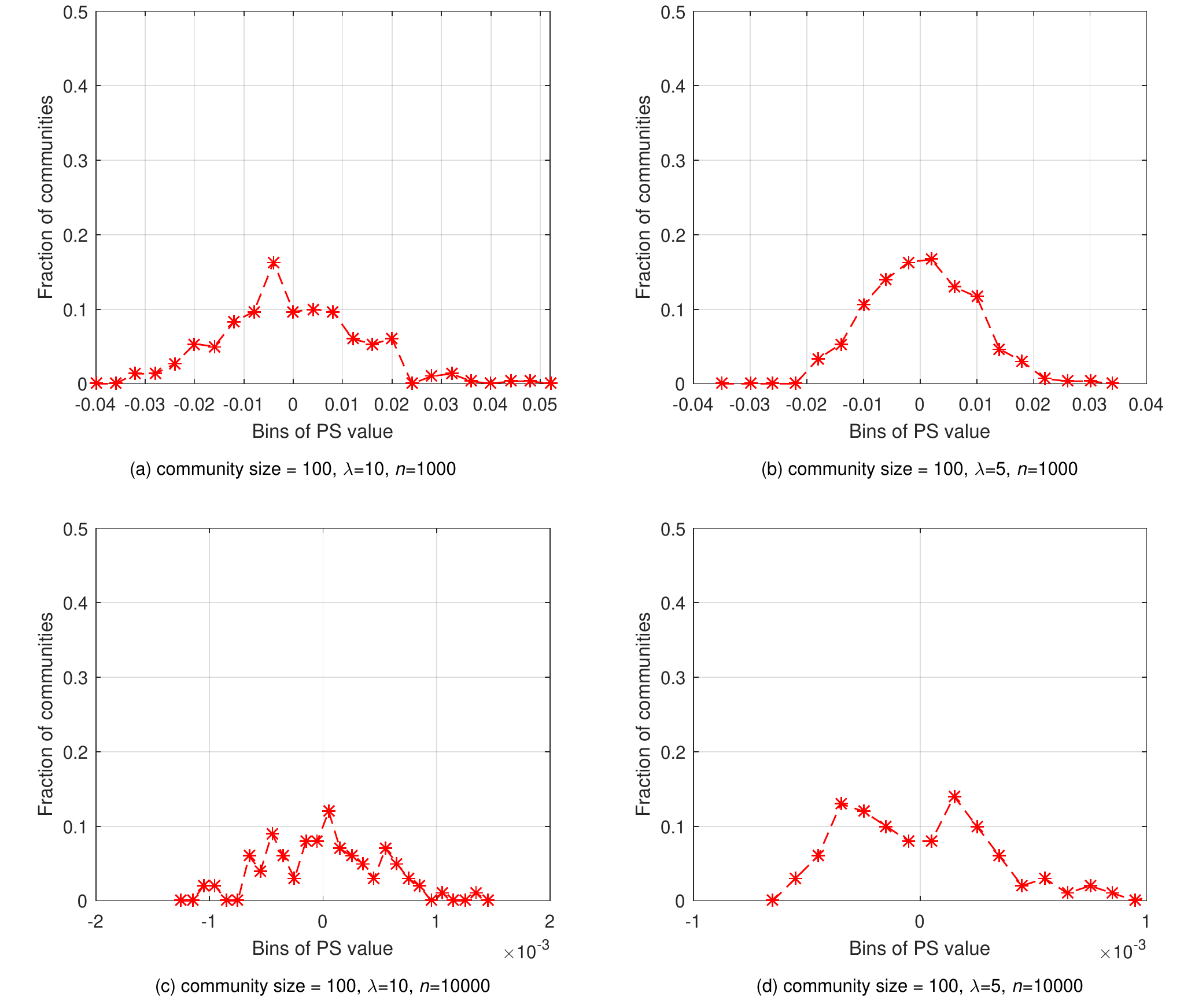}}
	 \caption{ The distribution of the PS values of a community for E-R random networks with different parameters.}
	  \label{fig10}
\end{figure*}
The large-scale real networks which are provided from \cite{yang2015defining} all have overlapping ground-truth communities.  Despite it is out of the scope of our paper,  we will still run CBCD along with other algorithms on these networks to test the performance of our algorithm on large-scale real networks.  To evaluate the performance of community detection algorithms on the networks with overlapping community structures, Overlapping Normalized Mutual Information (ONMI) \cite{lancichinetti2009detecting} is the major evaluation metric in this section. Since we will make a comparison on the networks with overlapping structures, then we set OVL of Focs to 0.6 for detecting overlapping communities. We choose two large-scale networks, Amazon and DBLP,  for testing the performance of different methods. The specific information of these two networks is given as follow.

\textbf{Amazon}:  It is an undirected network collected by crawling the Amazon website, where each node is a product sold on the website and  an edge exists  between two nodes (products) if they are frequently co-purchased.  The ground-truth communities are determined by the product category. Each connected component in a product category is regarded as an independent ground-truth community. The whole network have 334863 nodes, 925872 edges and 70928 communities. Ninety-one percent of the nodes participate in at least two communities.

\textbf{DBLP}:  It is a co-authorship network derived from the DBLP computer science bibliography.  Each author of a paper is viewed as a node. Two authors are connected by an edge if they have published at least one paper together.  Publication venue, e.g, journal or conference,  is the indicator of ground-truth community.  The authors who publish papers on the same journal or conference form a community. The whole network have 317080 nodes, 1049866 edges and 13477 communities. Thirty-five percent of the nodes participate in at least two communities.
\begin{table}[htb]
\small
\centering
\setlength{\abovecaptionskip}{0pt}
\setlength{\belowcaptionskip}{10pt}
\caption{The performance comparison of different algorithms on large-scale real networks with overlapping ground-truth communities}
\begin{tabular}{cm{12mm}<{\centering}m{8mm}<{\centering}m{6mm}<{\centering}m{8mm}<{\centering}m{6mm}<{\centering}}\hline
Data sets  & Algorithm & ONMI & NC &  ET   \\ \hline
\multirow{5}*{DBLP}     & CBCD     & $0.132$         &  40k  & 48s   \\
                        & Infomap    & $0.008$          &  109k   & 120s      \\
                        & Louvain   & $0.124$         & 170k  & 13s       \\
                        & SCD     & $0.146$    &  140k  & 15s \\
			& Focs     & $\textbf{0.213}$    &  24k  & 7s     \\
			& Attractor     & $0.061$    &  17k  & 43min  \\ \hline
\multirow{5}*{Amazon}    & CBCD     & $\textbf{0.246}$   &    40k   & 42s \\
                        & Infomap    &  $0.057$           &  213k    & 132s     \\
                        & Louvain   & $0.154$         & 266k  &  16s           \\
                        & SCD     & $0.158$      & 141k  & 6s    \\
			& Focs     & $0.207$    &  20k  & 5s      \\
			& Attractor     & $0.201$    &  23k  & 22min\\ \hline
\end{tabular}
\end{table}

Table 4 summaries the experimental results of different algorithms on two large-scale real networks, where NC is the number of detected communities and ET is the execution time of various algorithms.  For the DBLP network, Focs achieves the best performance and SCD achieves the second-best performance. Despite the performance of CBCD on DBLP is not as good as these two algorithms, CBCD is still better than others.  For the Amazon netowrk, CBCD outperforms the other algorithms. Focs is the second-best performer and Attractor is slightly inferior to Focs.  Infomap has the worst performance on both DBLP and Amazon,  which is in contrast with its performance on small real networks and LFR networks.
Although CBCD is effective on  detecting meaningful  communities, it has no obvious advantage with respect to the  execution time. This is what we should focus on in our future work.
\subsection{The Distribution of PS Values}
In this section,  we study the distribution of PS values defined in Formula (6)  of a community under the E-R model. First,  we generate 300 random networks for the specific parameters under the E-R model. These parameters are the average degree $\lambda$ and the  network size $n$ of random network.  Then,  we calculate the PS value of  a given community $S$, which is composed of 100 fixed nodes.  Consequently,  we obtain 300 PS values of  the given communities derived from 300 different random networks.  We divide PS values into many bins with equal length, where the low (high)  bins correspond to the set of  lower (higher) PS values.   In Fig. 10, bins are plotted  on the $x$-axis,  and for each bin,   the fraction of communities  whose PS  values  fall into that bin are plotted on the $y$-axis.  We can observe that the distribution of  PS value follows a Gaussian-like distribution.  The PS metric values are concentrated near 0 and most values fall into a very narrow interval. This phenomenon  corresponds to our theoretical result in section 3.2.1.  Comparing Fig. 10 (a) with  Fig. 10 (b) and comparing Fig. 10(c) with Fig. 10(d),  we can find that the interval that most PS  values fall into  becomes  shorter with the decrease of average degree $\lambda$. Besides,  the interval that most PS values fall into sharply shortens when the network size $n$ increases. It is a remarkable fact that the PS value of a community in the random networks generated from the E-R model is a very small value. This demonstrates that the PS value of a community is an effective metric for quantifying the goodness of a  community structure.
\section{Conclution}
In this paper, we introduce two novel node-centric community evaluation functions by connecting correlation analysis with community detection. We further show that  the correlation analysis is a novel theoretical framework which unifies some existing quality functions and converts community detection into a correlation-based optimization problem.  In this framework, we choose PS-metric and $\phi$-coefficient to eliminate the influence of random fluctuations and mitigate the resolution limit problem. Furthermore, we introduce three key properties used in mining association rule into the context of community detection to help us choose the appropriate correlation function.  A correlation-based community detection algorithm CBCD that makes use of PS-metric and $\phi$-coefficient  is proposed in this paper. Our proposed algorithm outperforms five existing state-of-the-art algorithms on both LFR benchmark networks and real-world networks. In
the future, we will investigate more correlation functions and extend our method  to overlapping community detection.



%




\ifCLASSOPTIONcompsoc
  \section*{Acknowledgments}
\else
  \section*{Acknowledgment}
\fi

This work was partially supported by the Natural Science Foundation of China under Grant No. 61572094.

\ifCLASSOPTIONcaptionsoff
  \newpage
\fi



\bibliographystyle{IEEEtran}
\bibliography{IEEEtran}
\end{document}